\documentclass[authoryear,reqno,12pt,a4paper]{elsarticle}  
\usepackage{amsthm,amsmath,amssymb,setspace,tabularx,rotating}
\usepackage{svg} % 核心：支持SVG插图
\usepackage{orcidlink}
\usepackage{amsmath,natbib,graphicx,soul}

\usepackage[export]{adjustbox}
\usepackage{verbatim,booktabs}
\usepackage{pifont}
\usepackage{caption}
\usepackage{xcolor,hyperref}
\usepackage[toc,page]{appendix}

\usepackage{tabularx}
\usepackage{longtable}
\usepackage[ruled,vlined]{algorithm2e}
\usepackage[toc,page]{appendix}

 \usepackage[top=1in, bottom=1in, left=0.8in, right=0.8in]{geometry}
\usepackage{algpseudocode}
\usepackage{lscape}
\usepackage{pdflscape}

\hypersetup{
pdftitle={xxx},
pdfauthor={xxx},
pdfkeywords={xxx},
colorlinks=true,
linkcolor=blue,
citecolor=blue,
urlcolor=blue,
bookmarksnumbered=true,%Show chapter/section number in PDF thumbnails
pdfstartview=%Acrobat reader's default zoom setting (vs Fit)
}

\robustify\bfseries
\newtheorem{lemma}{Lemma}[section]

\newtheorem{remark}{Remark}[section]

\numberwithin{equation}{section}

\def\i{\mathrm{i}}
\def\dd{\,\mathrm{d}}

\newcommand{\citewiki}[1]{(\href{https://en.wikipedia.org/wiki/#1}{\textmd{\textsc{WikipediA}}})}

\begin{document}
\begin{frontmatter}
\renewcommand{\thefootnote}{\fnsymbol{footnote}}
\title{On options-driven realized volatility forecasting: Information gains \\
via rough volatility model
}

% \author[hku]{Yilin Chen}
% \ead{yilincyl@connect.hku.hk}

\author[ust,ustgz]{Zheqi Fan\orcidlink{0000-0003-4752-9019}}
\ead{zheqi.fan@connect.ust.hk}

\author[pine]{Meng Melody Wang}
% \ead{wangm@pinevc.com.cn}

\author[bnbu]{Yifan Ye\corref{corrauthor}\orcidlink{https://orcid.org/0009-0004-3110-4141}}
\ead{yifanye@bnbu.edu.cn}

% \address[hku]{HKU Business School,\\ 
% University of Hong Kong,
% Hong Kong, China}

\address[ust]{
Division of EMIA, Hong Kong University of Science and Technology, Hong Kong, China}
\address[ustgz]{
Thrust of FinTech, Hong Kong University of Science and Technology (Guangzhou), China}
\address[pine]{Green Pine Innorise Fund, Shenzhen, China}
\address[bnbu]{Faculty of Business and Management, Beijing Normal - Hong Kong Baptist University, Zhuhai, China}
% \address[bnbuai]{School of AI and Liberal Arts,
% Beijing Normal - Hong Kong Baptist University, Zhuhai, China}

\cortext[corrauthor]{Corresponding author %\textit{Email:} yyeam@connect.ust.hk, \textit{Address:} Thrust of Financial Technology, Society Hub, Hong Kong University of Science and Technology, Guangzhou, China
}
%-----------------------------------------
\date{\today}
%-----------------------------------------
\begin{abstract}
\small
% We augment the Heterogeneous Autoregressive Regression model for forecasting realized volatility by incorporating model-based volatility estimators computed from traded options data. 
% Specifically, we propose an iterative two-step approach to extract spot volatility under the rough stochastic volatility framework, leveraging the parametric inference framework of \cite{andersen2015parametric}. 
%  % We apply the Markovian approximation approach to overcome the computational issue of rough volatility model for pricing vanilla options.
% We apply deep learning surrogate technique to accelerate the process for extracting spot volatility estimators from large-scale options panel. 
% We demonstrate that these augmentations result in improved daily forecasts for realized volatility.
We examine whether model-based spot volatility estimators extracted from traded options data enhance the predictive power of the Heterogeneous Autoregressive (HAR) model for realized volatility. Specifically, we infer spot volatility under the rough stochastic volatility model via an iterative two-step approach following \cite{andersen2015parametric} and adopt a deep learning surrogate to accelerate model estimation from large-scale options panels. 
Benchmarked against traditional stochastic volatility models (Heston, Bates, SVCJ) and the VIX index, our results demonstrate that the augmented HAR-RV-RHeston model improves daily realized volatility forecasting accuracy and sustains superior performance across horizons up to one month.
\\~
\newline
%-----------------------------------------
\noindent \textbf{Keywords}: Forecasting realized volatility; 
Rough volatility; 
Deep learning; 
Parametric inference;
HAR model; 
Option-implied information \\~
\newline
%-----------------------------------------
\noindent \textbf{JEL Codes}: C51, C53, G17
\end{abstract}
\end{frontmatter}
%-----------------------------------------
\newpage
\section{Introduction}
\noindent 
Volatilities, intrinsically linked with macro- and microeconomics, play a central role in investments, asset pricing, risk management, and monetary policies. The past few decades have witnessed tremendous progress in modeling time-varying volatilities.
Thanks to the availability of high-frequency data and recent advancements in volatility measurement using such data, we can now estimate daily volatilities with high accuracy. The realized variance (RV)—the square of realized volatility—is a consistent estimator of the integrated variance (IV) as the sampling frequency increases, provided microstructure noise can be safely ignored \citep{barndorff2002econometric}.

\cite{corsi2009simple} proposed the Heterogeneous Autoregressive (HAR) model based on the heterogeneous market hypothesis, which effectively replicates the long-memory pattern of realized volatility. Despite its parsimonious structure, the HAR model has demonstrated robust forecasting performance and is now widely adopted as a benchmark in volatility forecasting literature. Subsequent research has introduced numerous extensions to enhance its predictive power, with a comprehensive review provided in \cite{clements2021practical}. Most existing volatility forecasting studies rely on backward-looking information to infer future volatility. In contrast, derivative prices contain forward-looking information, as they reflect market participants’ judgments about future cash flows and trends. Theoretically, incorporating such forward-looking information enriches the information set $\mathcal{F}_t$, making it unsurprising that methods leveraging this information often yield more accurate volatility predictions.

In the finance literature, \cite{christensen1998relation} was among the first to highlight that option prices, reflected in implied volatility—contain valuable information about future stock market volatility. Beyond their forward-looking nature, the nonlinear payoff structure of options inherently links their prices to the volatility of the underlying asset, endowing them with predictive potential for future volatility. Empirically, \cite{todorov2022information} showed that option data offers nontrivial gains for volatility forecasting, primarily by enabling more precise measurement of spot volatility. Building on this, \cite{michael2025options} extracted model-based volatility estimators from the Heston and Bates models and augmented the HAR framework, improving realized volatility forecasting performance.

A notable recent advancement in volatility modeling is the emergence of rough volatility models, which have garnered substantial attention in the financial industry and academic research in mathematical finance and financial econometrics since the seminal work of \cite{gatheral2018volatility}. Notably, the 2021 Risk Awards recognized the contributions of rough volatility models\footnote{\url{https://www.risk.net/awards/7736196/quants-of-the-year-jim-gatheral-and-mathieu-rosenbaum}} for their ability to capture refined volatility dynamics (e.g., the rough path persistence of volatility) that traditional models overlook. However, rough volatility models are non-Markovian, lacking closed-form solutions and introducing increased computational complexity. A core model in this class is the rough Heston model \citep{el2019characteristic}, which features an explicit characteristic function and serves as a natural extension of traditional stochastic volatility model.

Against this backdrop, we augment the HAR model by incorporating model-based spot volatility estimators inferred from options data. Specifically, we use the spot volatility estimator from the rough Heston model as the core extended predictor, with three traditional stochastic volatility models—Heston \citep{Heston1993}, Bates \citep{Bates2000}, and SVCJ \citep{eraker2003impact}—as benchmarks. We also include the non-parametric option-implied volatility index (VIX) as an exogenous predictor, forming the HAR-RV-VIX model which serves as an additional benchmark.

To infer the spot volatility series from each stochastic volatility model, we adopt the parametric inference framework proposed by \cite{andersen2015parametric}, which involves an iterative two-step procedure. First, we initialize the model’s structural parameters and derive the corresponding spot variance series. Second, we optimize these structural parameters using the spot variances obtained in the first step. This process iterates until a predefined termination criterion is satisfied, ensuring consistent inference of spot volatility. Given the computational complexity of rough volatility models in large options panel settings, we leverage deep learning techniques to approximate the option pricing function as a lightweight surrogate, accelerating the estimation process.

Using the inferred spot volatility series as new predictors, we conduct comprehensive empirical analyses: in-sample fitting, out-of-sample realized volatility forecasting, and multi-horizon predictive power tests. Our empirical results demonstrate that the augmented HAR-RV-RHeston model outperforms the benchmark models, delivering improved forecasting accuracy for daily realized volatility and sustaining superior performance even for forecast horizons up to one month.

~

\noindent\textbf{Related literature} -  Our study contributes to the extensive body of literature on realized volatility (RV) forecasting. 
While a wealth of research has focused on modeling and predicting daily return volatility, the majority of existing approaches—predominantly parametric GARCH or stochastic volatility (SV) models—derive daily volatility forecasts exclusively from daily return data. 
A key limitation of these traditional frameworks is their inability to leverage high-frequency data, which has become increasingly accessible in recent years. 
Against this backdrop, RV (computed as the sum of squared intraday returns) has emerged as a widely adopted measure of volatility, addressing the shortcomings of daily return-based models.
\cite{corsi2009simple} proposed a parsimonious autoregressive (AR)-type model, known as the Heterogeneous Autoregressive (HAR) model, to forecast daily RV by decomposing it into volatility components across short-, medium-, and long-term time horizons. 
Subsequent studies have extended the baseline HAR framework by incorporating additional predictive terms \citep{haugom2014forecasting,jiang2019volatility,bekaert2014vix,ruan2025merton}. 
Relative to this line of research, our paper seeks to enhance the HAR model by integrating novel informational variables derived from options data. 
We adopt a model-based approach: specifically, we apply the parametric inference framework of \cite{andersen2015parametric} to extract spot volatility estimators from a rough stochastic volatility (RSV) model. 

Among these, our paper is closely related to \cite{michael2025options} but not a mere extension. We advance the literature through three distinct, logically connected improvements: we adopt a more sophisticated rough volatility model that captures more realistic volatility dynamics consistent with empirical evidence from both realized volatility time series \citep{gatheral2018volatility} and options markets \citep{bayer2016pricing}; we use a more robust parametric inference framework, with the consistency and finite sample properties of focal estimators formally established in \cite{andersen2015parametric}, to extract spot volatility estimators from large option panels, ensuring reliable recovery of latent states and model parameters; and we address the computational challenge of pricing under rough volatility models (which go beyond the affine jump-diffusion class \cite{duffie2000}) by leveraging deep learning to train a lightweight surrogate for the option pricing function, enabling efficient estimation on large-scale datasets.

Our paper also contributes to the emerging and increasingly popular literature on machine learning applications in finance. Most existing studies in this strand focus on methodological innovations, such as employing machine learning, deep learning or agentic AI approaches to forecast asset returns or realized volatility \citep{zhang2024volatility,zhang2025forecasting,li2025automated,huang2026beyond}. Our approach to adopting machine learning differs from these studies: rather than directly training neural networks on market data, we train a deep learning model to learn the mapping from synthetic data, leveraging the function approximation property of neural networks. We then utilize this lightweight statistical surrogate to accelerate the intensive computations involved in subsequent model estimation and calibration. This approach aligns more closely with the literature on deep learning-based calibration \citep{Blanka2021,rosenbaum2021deep,Chen2026,fan2026deep}.

The remainder of this paper is structured as follows. 
Section \ref{sec:data} formally defines realized volatility and provides a detailed description of the dataset employed in our analysis. 
In Section \ref{sec:extract}, we elaborate on the methodological framework for extracting the spot volatility estimator from options panel data, based on the (rough) stochastic volatility model. 
Section \ref{sec:empirics} presents and discusses the empirical findings derived from our analysis. 
Finally, Section \ref{section-conclusion} summarizes the key findings of this study and offers concluding remarks.

\section{Data and Realized Volatility}
\label{sec:data}

\subsection{Realized volatility}
\noindent We use high-frequency intraday returns for S\&P 500 from Thomas Reuters Tick History database.
Our data sample covers the time period from January 2011 to June 2021, with June 2021 to June 2021 as the out-of-sample testing period.
Regarding the sampling frequency, we use 5-minute log-returns for which the market microstructure noise can be safely ignored \citep{liu2015does}.
We also use daily close values of the Cboe volatility index (VIX)
related to the S\&P 500 for candidate models.
The options data we used for extracting spot volatility estimator are retrieved from OptionMetrics, and the details are outline in Appendix \ref{sec:options data}.

\begin{table}[htbp]
\caption{\textbf{Descriptive statistics for RV values}}
\par
{\footnotesize
This Table shows the summary statistics of the SPX realized volatility computed from intraday 5-min returns. 
}
\noindent
\begin{center}
{\scriptsize
\begin{tabular}{rrrrrrrrrr}
\toprule
           &       mean &        std &        min &       25\% &       50\% &       75\% &        max &       skew &       kurt \\
\midrule

 $RV_t^d$ &    0.0815  &    0.0513  &    0.0150  &    0.0494  &    0.0680  &    0.0972  &    0.5797  &    2.5970  &   11.6092  \\

 $RV_t^w$ &    0.0815  &    0.0450  &    0.0224  &    0.0529  &    0.0696  &    0.0933  &    0.4147  &    2.2750  &    7.6920  \\

$RV_t^m$ &    0.0816  &    0.0379  &    0.0286  &    0.0583  &    0.0700  &    0.0908  &    0.2354  &    1.6389  &    2.5747  \\
\bottomrule
\end{tabular}        
   }
\end{center}
\label{tab:summary rv}
\end{table}

\begin{figure}[htbp]
\begin{center}
\includegraphics[width=0.80\textwidth]{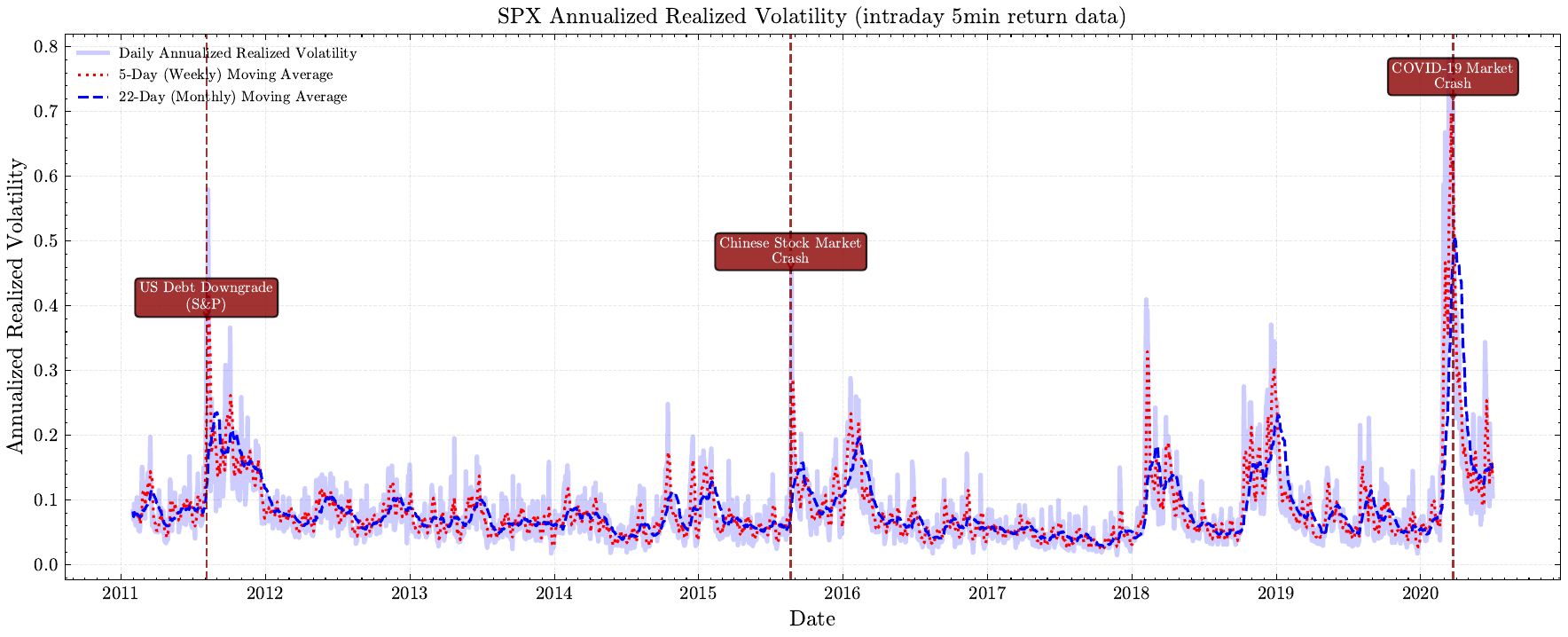}
\end{center}
\caption{\textbf{Realized volatility from high frequency returns}}
{
\footnotesize
\noindent
\begin{spacing}{1.4}
The figure displays the daily realized volatility of SPX computed from the high-frequency intraday returns.
\end{spacing}
}
\label{fig:ts-rv}
\end{figure}

% \subsection{RV alternative}

In general, let \(P_{i,t}\) denote the price process of financial asset \(i\), which follows the stochastic differential equation:
\begin{equation}
\mathrm{d} \log P_{i,t} = \mu_i \mathrm{d}t + \sigma_{i,t} \mathrm{d}W_t,
\end{equation}
where \(\mu_i\) is the constant drift term, \(\sigma_{i,t}\) represents the time-varying instantaneous volatility, and \(W_t\) is a standard Brownian motion. The theoretical integrated variance (IV) of asset \(i\) over the time interval \((t-h, t]\) is defined as:
\begin{equation}
\operatorname{IV}_{i,t}(h) = \int_{t-h}^t \sigma_{i,s}^2 \mathrm{d}s,
\end{equation}
with \(h\) denoting the look-back horizon (e.g., 1 trading day).

To estimate the unobservable integrated variance, we use high-frequency intraday data and adopt 5-minute logarithmic mid-price returns. As demonstrated by \cite{liu2015does}, the 5-minute sampling interval balances the trade-off between mitigating microstructure noise \citep{hansen2006realized} and retaining informative volatility dynamics, outperforming other sub-sampling frequencies for daily realized volatility (RV) forecasting and thus becoming a widely accepted standard in the literature. The 5-minute logarithmic return of asset \(i\) during \((t-1, t]\) is calculated as:
\begin{equation}
r_{i,t} := \log\left(\frac{P_{i,t}}{P_{i,t-1}}\right),
\end{equation}
where \(P_{i,t}\) is the mid-price at time \(t\), defined as \(P_{i,t} = \frac{P_{i,t}^b + P_{i,t}^a}{2}\). Here, \(P_{i,t}^b\) and \(P_{i,t}^a\) represent the best bid price and best ask price of asset \(i\) at time \(t\), respectively.

\cite{andersen2001distribution} and \cite{barndorff2002econometric} have proven that the sum of squared intraday returns is a consistent estimator of the unobservable integrated variance. Given the accessibility of high-frequency data, we use realized volatility (RV) as a proxy for IV. Notably, daily realized variance exhibits a highly positive skewed distribution. To alleviate the impact of extreme values and improve model fitting, we follow \citep{zhang2024volatility, michael2025options} among others and apply a logarithmic transformation. Specifically, the logarithmic realized volatility of asset \(i\) over the look-back horizon \(b\) (i.e., during \((t-b, t]\)) is defined as:
\begin{equation}
\mathrm{RV}_{i,t}^{(b)} := \log\left[\sum_{s=t-b+1}^t r_{i,s}^2\right].
\end{equation}

Figure \ref{fig:ts-rv} illustrates the time series of computed (annualized) realized volatility over our sample periods. Table \ref{tab:summary rv} reports the descriptive statistics of the (annualized) daily, weekly, and monthly RVs without log transformation. These series are
significantly skewed.

\section{Extracting volatility estimators under stochastic volatility models}
\label{sec:extract}
\noindent In this section, we describe how to extract the spot volatility estimators from large panel of traded options prices under the rough Heston model \eqref{eq-rheston} and three widely-used stochastic volatility models Heston, Bates, and SVCJ model\footnote{Please find more details in Appendix \ref{appendix:model details}.}.

\subsection{Rough Heston stochastic volatility model}
\noindent On a complete probability space $(\Omega, \mathcal{F}, \mathbb{Q})$ with a filtration $(\mathcal{F}_t)_{t \in [0,T]}$ and a risk-neutral measure $\mathbb{Q}$, the rough Heston model proposed in \cite{el2018perfect} and \cite{el2019characteristic} is represented as:
\begin{equation}
\label{eq-rheston}
\begin{aligned}
\frac{\dd S_t}{S_t}=r\dd t+ \sqrt{V_t}\dd B_t,\ V_t=g_0(t)+\int_0^t K(t-s)\left(-\lambda V_s\dd s+\nu \sqrt{V_s} \dd W_s \right),
\end{aligned}
\end{equation}
where $B_t$ and $W_t$ are a pair of correlated standard Brownian motions with $\dd B_t  \dd W_t=\rho \dd t$. 
Here, the correlation coefficient $\rho$ and all other parameters are assumed to be constant. Let $r$ denote the risk-free interest rate. We note that this is of the class of one-factor models and we will compare this to a family of one-factor stochastic volatility models. In practice, researchers typically employ a kernel approximation of the form \citep{abi2019markovian,abi2019multifactor},
$
K^n(t) = \sum_{i=1}^n c_i e^{- x_i t }, \quad  t\geq 0$.
This approximation significantly simplifies the numerical implementation of the rough Heston model by exploiting the Markovian structure of the approximating model. It should be noted, however, that even under the Markovian approximation introduced by \cite{AbiJaber2019}, simulating the model remains computationally expensive. Such kernel approximation leads to the following LHeston model proposed in \cite{abi2019multifactor} and \cite{AbiJaber2019}. Under the same risk-neutral pricing measure $\mathbb{Q}$, the joint dynamics of the index value process $S_t$ and its instantaneous variance $V_t$ takes the form
\begin{equation}
\label{eq-lheston-cj}
\begin{aligned}
&\frac{\dd S_t}{S_t}=r \dd t+ \sqrt{V_t} \dd B_t, \\
&\dd V_t=\dd g_0^n(t)+\sum_{i=1}^{n} c_{i} \dd U_t^{i}, \\
&\dd U_t^{i}=\left(-x_i U_t^{i}-\lambda V_t\right) \dd t+\nu \sqrt{V_t} \dd W_t,\ U_0^{i}=0, \quad i=1, \ldots, n ,
\end{aligned}
\end{equation}

All the variance components $\left(U^{i}\right)_{1 \leq i \leq n}$ start from zero and share the same one-dimensional Brownian motion $W$, except that they exhibit mean reversion at different rates $\left(x_i\right)_{1 \leq i \leq n}$. 
In addition, they share the common negative drift rate $-\lambda V_t$.
The variance components are characterized by the weight coefficients $\left(c_i\right)_{1 \leq i \leq n}$.
% The deterministic function $g_0^n$ can be chosen by fitting initial variance curves. 
We follow \cite{AbiJaber2019} to choose
$
g_0^n(t) = V_0 + \lambda \theta \sum_{i=1}^n c_i \int_0^t e^{-x_i(t-l)} \dd l=V_0+\lambda \theta \sum_{i=1}^n \frac{c_i}{x_i}\left(1-e^{-x_i t}\right)
$. The characteristic functions of rough Heston \eqref{eq-rheston} used for computing option prices are detailed in Appendix \ref{sec:ChF}.
%-------------------------------------------

%-------------------------------------------
\subsubsection{Deep surrogate for vanilla option pricing}
\noindent While the characteristic function of the rough Heston model \eqref{eq-rheston} is analytically available, it involves a fractional Riccati equation. This leads to substantial computational inefficiency when applying Fourier inversion techniques for European option pricing. For its multi-factor Markovian approximation—the lifted Heston model \eqref{eq-lheston-cj}—the characteristic function still requires solving dozens of Riccati ordinary differential equations (ODEs), which remains computationally intensive.

We denote the option price obtained via numerical methods (e.g., Monte Carlo simulation method or Fourier inversion method via (semi-)explicit characteristic function) as \(P^{Model}(\Theta;\xi)\), where \(\Theta\) represents model-specific parameters and \(\xi\) denotes contract-related variables (e.g., strike price, time to maturity). Corresponding market prices are denoted as \(P^{Mkt}(\xi)\). To accelerate model estimation and calibration, we adopt deep learning techniques following \citep{Blanka2021, rosenbaum2021deep, Chen2026,fan2026deep}, specifically the ``from model parameters to prices (MtP)" approach. This framework consists of two core steps: first, training deep neural networks to approximate the pricing function—mapping model parameters and contract variables to option prices; second, applying traditional optimization algorithms to find optimal parameters that minimize the discrepancy between model outputs and market data.

Benefiting from the multi-factor Markovian approximation of rough Heston model, which is more analytically tractable, we compute derivative prices under the lifted Heston framework via Fourier inversion.
We construct a large synthetic dataset (with millions of samples) where each observation \((X, y)\) includes combinations of \(\{\Theta, \xi\}\) as inputs \(X\), and the corresponding option price \(P^{Model}(\Theta;\xi)\) as the target \(y\). The neural networks employed are multilayer perceptrons, with hyperparameters tuned following established literature. The theoretical validity of such function approximation is guaranteed by the Universal Approximation Theorem.

After training, the computationally lightweight deep surrogate replaces the costly numerical pricing engine in the estimation framework. For detailed technical implementations, we refer to the aforementioned studies. To verify the approximation accuracy of the pre-trained surrogate, we evaluate average percentage errors and their standard deviations across sub-modules of moneyness and maturity, as presented in Figure \ref{fig:surrogate accuracy}. The results confirm that the neural network achieves reliable approximation performance.

\begin{figure}[htbp]
\begin{center}
\includegraphics[width=0.80\textwidth]{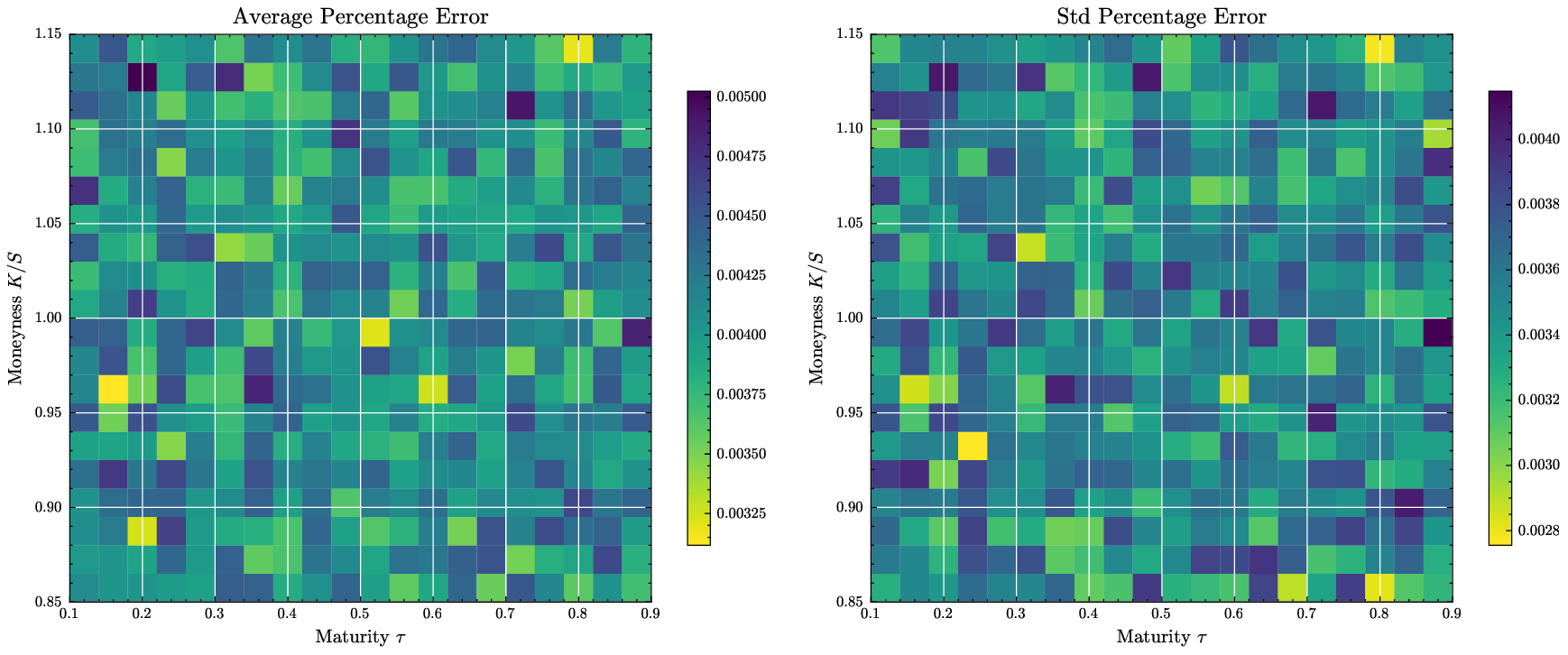}
\end{center}
\caption{\textbf{Accuracy for pre-trained deep surrogate}}
{
\footnotesize
\noindent
\begin{spacing}{1.4}
The figure displays the numerical accuracy of our pre-trained deep learning surrogate for option pricing engine.
\end{spacing}
}
\label{fig:surrogate accuracy}
\end{figure}

\subsection{Parametric inference}

\noindent We adopt the parametric inference framework for option panels and latent state recovery proposed by \citet{andersen2015parametric}, which has been rigorously validated via asymptotic theory (consistency, stable convergence to mixed-Gaussian laws) and extensive Monte Carlo simulations. This method is particularly suited for extracting the single latent volatility $V_t$ and model parameters from option panels with fixed time spans and expanding cross-sections, an advantage we leverage in our empirical implementation, following the application logic of \citet{andersen2015risk}.

The core estimation hinges on solving a penalized nonlinear least squares (NLS) optimization, integrating cross-sectional option price variation and time-series model-free realized volatility from high-frequency underlying asset data. For our single latent variable $V_t$, the optimization problem is formally defined as:
\begin{equation}
\begin{aligned}
\left(\left\{\hat{V}_t\right\}_{t=1, \ldots, T}, \hat{\theta}\right)= \underset{\left\{V_t\right\}_{t=1, \ldots, T}, \theta \in \Theta}{\arg \min } \sum_{t=1}^T\left\{
\frac{1}{N_t} \sum_{j=1}^{N_t}\left(\tilde{\kappa}_{t,k_j,\tau_j} - \kappa\left(k_j, \tau_j, V_t, \theta\right)\right)^2 + 
\lambda_n \left(\sqrt{\hat{V}_t^n} - \sqrt{V_t}\right)^2
\right\},
\label{eq:andersen_core_est}
\end{aligned}
\end{equation}
where $\tilde{P}_{t,k_j,\tau_j}$ denotes the market-observed option price for the $j$-th option on day $t$, characterized by moneyness $k_j = K/S_t$ (with $K$ as strike price and $S_{t}$ as underlying price) and time-to-maturity $\tau_j$;
$P\left(k_j, \tau_j, V_t, \theta\right)$ is the model-implied option price, derived from the risk-neutral dynamics parameterized by $\theta$ (full parameter vector for our model) and the single latent spot volatility $V_t$;
$\hat{V}_t^n$ is the nonparametric realized volatility estimator.
$\lambda_n$ is the penalization weight balancing option fit and alignment with high-frequency volatility. Following \cite{andersen2015parametric,andersen2015risk}, we set $\lambda_n=0.2$, ensuring the option panel dominates estimation while regularizing extreme $\hat{V}_t$. 
$N_t$ is the number of options observed on day $t$.\footnote{The key difference between the parameters and the
state vector is that the latter changes from day to day, while the former must
remain invariant across the sample. The longer the time span covered by the
sample, the more restrictive is this invariance condition for the risk-neutral
measure.\citep{andersen2015parametric}.}
\begin{remark}
    The objective function \eqref{eq:andersen_core_est} includes two parts. The first component is the mean squared error of implied volatilities derived from a panel data of market option prices and option prices computed from a model-based option pricing formula, respectively.
    \begin{equation}
{\operatorname{Option~Fit}_t}=\frac{1}{N_t} \sum_{j=1}^{N_t}\left(\hat{P}_{t, k_j, \tau_j}-P\left(k_j, \tau_j, V_t, \theta\right)\right)^2,
\end{equation}
    The second part of the objective function punishes the estimation according to the degree of deviation between a time-series of inferred spot volatilities and realized volatilities. 
    \begin{equation}
    \label{eq：vol fit}
\operatorname{Vol~Fit}_t=\left(\sqrt{\widehat{V}_t^{\left(n, m_n\right)}}-\sqrt{V\left( \theta\right)}\right)^2
\end{equation}
This
restriction follows from the fact that the diffusion coefficient of X is invariant
under an equivalent measure change (from $\mathbb{Q} $ to $\mathbb{P}$), so the two estimates should not be statistically distinct.
\end{remark}

The estimator specified in \eqref{eq:andersen_core_est} is derived through joint optimization over the model parameter vector $\theta$ and the sequence of latent state vector realizations $\{S_t\}_{t=1}^T$. Despite the inherent high-dimensionality of this optimization problem (involving both cross-sectional parameters and time-varying latent states), its tractability is guaranteed by the specific structure of the objective function, as emphasized in \citet{andersen2015risk}. An illustrative workflow to help understand the implementation details is shown in Figure \ref{fig:flowchart}.

\begin{figure}[htbp]
\begin{center}
\includegraphics[width=0.99\linewidth]{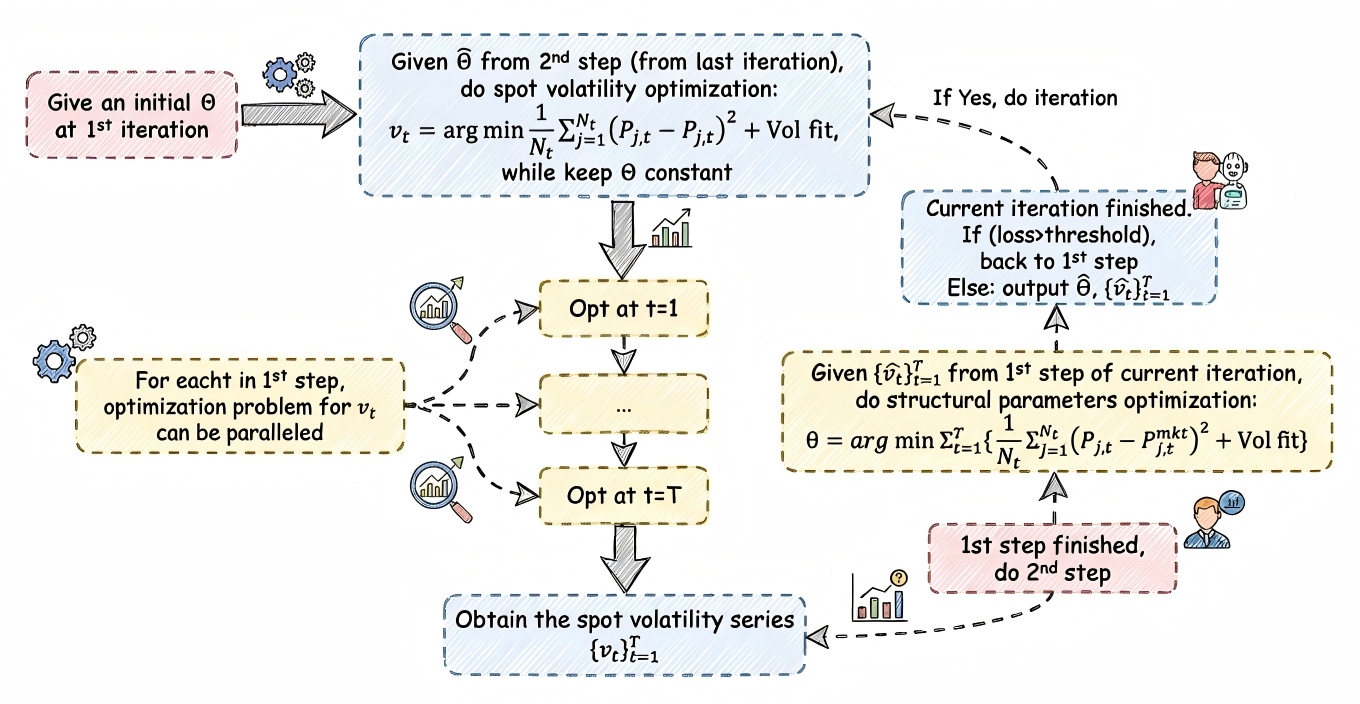}
\end{center}
\caption{\textbf{Flowchart of the parametric inference procedure}}
{
\footnotesize
\noindent
\begin{spacing}{1.4}
The figure illustrates in detail the process of the iterative two-step estimation procedure to extract of the volatility estimator.
\end{spacing}
}
\label{fig:flowchart}
\end{figure}

\subsubsection{Implementation details of iterative two-step procedure}  
\noindent This iterative two-step procedure treats the spot volatilities as latent variables that are re-estimated in a daily basis.\footnote{It is observed that, in practical applications, financial model parameters are generally divided into static modeling choices and frequently updated market inputs. To maintain market consistency while treating the latent spot volatility as a dynamic market parameter, a two-step calibration procedure \citep{ballotta2022smiles,fan2026deep,fan2026jod}.} Indeed, we estimate the model’s structural parameter set $\Theta = \left\{ \kappa,\theta,\sigma,\mu,\rho,H \right\}$, as well as the spot variance $\left\{V_t\right\}_{t=1, \ldots, T}$ separately, where $T$ is the total number of trading days used in the estimation period. The details of the iterative procedure are presented as follows.

\begin{enumerate}
\item \textbf{Inferring spot variance}. We start with a given set of structural parameters $\Theta$. For each date $t=1,...,T$ within the estimation sample, we solve a nonlinear least squared optimization to get the spot variance series:
\begin{equation}
\begin{aligned}
\begin{aligned}
\left\{\widehat{V}_{t}\right\} =\arg \min \frac{1}{N_t}\sum_{j=1}^{N_t}\left(C_{j, t}-C_j\left(\Theta, V_{t}\right)\right)^2  / \operatorname{Vega}_{j, t}^2 + \operatorname{Vol\ Fit}_t
\end{aligned}
\end{aligned}
\label{eq:objective}
\end{equation}
where $C_{j, t}$ is the market price of contract $j$ on date $t$, and $C_j\left(\Theta, V_{t}\right)$ is the corresponding model-based price. $N_t$ is the number of contracts available on a day $t$. $\operatorname{Vega}_{j, t}$ is the Black-Scholes vega. The last penalty term is given by \eqref{eq：vol fit}.

\item \textbf{Structural parameters estimation}. The second step is to optimize the aggregate objective function using the estimated spot variance $\left\{\widehat{V}_{t}\right\}, t=1,...,T$ obtained from step 1 of the same iteration:
\begin{equation}
\begin{aligned}
\begin{aligned}
\widehat{\Theta}=\arg \min \frac{1}{N} \sum_{j, t}^N\left(C_{j, t}-C_j\left(\Theta, V_{t}\right)\right)^2 / \operatorname{Vega}_{j,t}^2
\end{aligned}
\end{aligned}
\label{eq:objective2}
\end{equation}
with all data points $N=\sum_{t=1}^T N_t$.
\end{enumerate}
The procedure iterates between Step 1 and Step 2 until no further significant decreases in the overall objective \eqref{eq:objective} in Step 2 are obtained. 
Here we use the widely-used first-order Taylor approximation to the implied volatility errors $
\text { IVRMSE }  \equiv \sqrt{\frac{1}{N} \sum_{j, t}^N\left(\sigma_{j, t}-\sigma_{j, t}\left(\Theta, V_{t}\right)\right)^2} 
\approx \sqrt{\frac{1}{N} \sum_{j, t}^N\left(C_{j, t}-C_j\left(\Theta, V_{t}\right)\right)^2 / \operatorname{Vega}_{j, t}^2}$.
To avoid look-ahead bias when inferring the spot volatility series for future realized volatility forecasting, we adopt an out-of-sample framework following \citep{christoffersen2009shape,ye2025modeling,ye2026vix}: we estimate model parameters annually and keep these structural parameters constant to infer the spot volatility for the subsequent year. 
Figure \ref{fig:spot vol} presents the time series of inferred spot volatility derived from the Heston, Bates, SVCJ, and rough Heston models. 
Table \ref{tab:spot vol} reports the corresponding summary statistics for these series. They are also quite positively skewed, and we will take the log transformation when forecasting realized volatilities.

\begin{figure}[htbp]
\begin{center}
\includegraphics[width=1.00\textwidth]{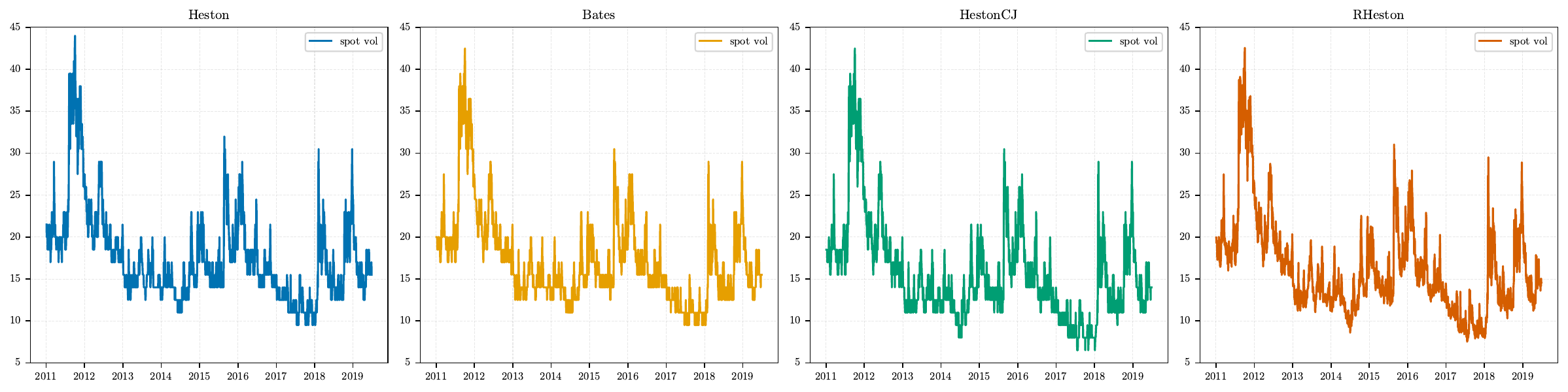}
\end{center}
\caption{\textbf{Spot volatility estimators from SV model}}
{
\footnotesize
\noindent
\begin{spacing}{1.4}
The figure displays the daily realized volatility of SPX computed from the high-frequency intraday returns. The daily spot prices
are also shown here in the black dotted line to help benchmark.
\end{spacing}
}
\label{fig:spot vol}
\end{figure}

\begin{table}[htbp]
\caption{\textbf{Summary statistics for spot vol estimator}}
\par
{\footnotesize
This table shows the choice of hyperparameters about the neural network architecture we use. 
}
\noindent
\begin{center}
{\scriptsize
\begin{tabular}{rrrrrrrrrr}
\toprule
           &       mean &        std &        min &       25\% &       50\% &       75\% &        max &       skew &       kurt \\
\midrule
    Heston &    0.1763  &    0.0588  &    0.0565  &    0.1371  &    0.1657  &    0.2021  &    0.4487  &    1.3589  &    2.7458  \\

     Bates &    0.1716  &    0.0564  &    0.0608  &    0.1328  &    0.1631  &    0.1965  &    0.4642  &    1.2804  &    2.5237  \\

  SVCJ &    0.1559  &    0.0587  &    0.0494  &    0.1163  &    0.1445  &    0.1830  &    0.4328  &    1.4435  &    2.8361  \\

   RHeston &    0.1624  &    0.0566  &    0.0752  &    0.1246  &    0.1494  &    0.1870  &    0.4257  &    1.5261  &    3.0281  \\
\bottomrule
\end{tabular}      
   }
\end{center}
\label{tab:spot vol}
\end{table}

\section{Empirical results}
\label{sec:empirics}
\subsection{HAR and HARX Models}
\noindent We adopt the Heterogeneous Autoregressive (HAR) model proposed by \cite{corsi2009simple} as the benchmark for realized volatility (RV) modeling and forecasting. Inspired by the Heterogeneous Market Hypothesis, the HAR model captures volatility dynamics driven by market participants with distinct trading horizons (short-term, medium-term, long-term). It approximates long-memory characteristics through an additive cascade of volatility components corresponding to these horizons, avoiding complex long-memory specifications while maintaining predictive power.

First, we define horizon-specific volatility measures as the average of daily RVs over respective periods:
$
\mathrm{RV}_{t-1}^{(h)} \equiv \frac{1}{h} \sum_{j=1}^h \mathrm{RV}_{t-j},
$
where \(h \in \{1,5,22\}\) represents daily, weekly, and monthly horizons consistent with typical trading calendars. The baseline HAR model is specified as:
\begin{equation}
\operatorname{HAR}: \log(\mathrm{RV}_t) = c + \beta^{(1)} \log\left(\mathrm{RV}_{t-1}^{(1)}\right) + \beta^{(5)} \log\left(\mathrm{RV}_{t-1}^{(5)}\right) + \beta^{(22)} \log\left(\mathrm{RV}_{t-1}^{(22)}\right) + \varepsilon_t
\end{equation}
where \(c\) is the constant term, \(\beta^{(\cdot)}\) are coefficients for horizon-specific volatility components, and \(\varepsilon_t\) denotes the innovation term. We use the logarithmic transformation of RV to address its high positive skewness, improving model fit.
Model parameters are estimated via ordinary least squares (OLS). To address potential heteroscedasticity, we apply the Newey-West covariance correction \citep{newey1987hypothesis}. 
\\

Various extensions of the HAR model incorporating exogenous variables (HARX models) have been proposed in the vast literature and shown to enhance predictive performance \citep{bekaert2014vix, jiang2019volatility,todorov2022information,michael2025options}.To assess the incremental information of option-implied both non-parametric volatility (VIX) and model-based spot volatility estimators from stochastic volatility models, we specify the following HARX models:

The HAR-RV-VIX model \citep{bekaert2014vix,michael2025options,ruan2025merton} incorporates VIX as an exogenous variable to capture option-implied volatility information:
\begin{equation}
\begin{aligned}
\log(\mathrm{RV}_t) = c & + \beta^{(1)} \log\left(\mathrm{RV}_{t-1}^{(1)}\right) + \beta^{(5)} \log\left(\mathrm{RV}_{t-1}^{(5)}\right) + \beta^{(22)} \log\left(\mathrm{RV}_{t-1}^{(22)}\right)  + \beta^{\text{VIX}} \text{VIX}_{t-1} + \varepsilon_t
\end{aligned}
\end{equation}

The HAR-RV-SV model augments the baseline with SV estimators inferred from different stochastic volatility models:
\begin{equation}
\begin{aligned}
\log(\mathrm{RV}_t) = c & + \beta^{(1)} \log\left(\mathrm{RV}_{t-1}^{(1)}\right) + \beta^{(5)} \log\left(\mathrm{RV}_{t-1}^{(5)}\right) + \beta^{(22)} \log\left(\mathrm{RV}_{t-1}^{(22)}\right)  + \beta^{\text{SV}} \text{SV}_{t-1}^{\mathcal{M}_i} + \varepsilon_t
\end{aligned}
\end{equation}
where \(\mathcal{M}_i\) denotes the set of stochastic volatility models including Heston, Bates, SVCJ, and rough Heston, and \(\text{SV}_{t-1}^{\mathcal{M}_i}\) is the spot volatility estimator from model \(\mathcal{M}_i\) at time \(t-1\). Hereafter, we will call them like HAR-RV-RHeston.

\subsection{Forecast evaluation}
\label{sec:forecast_evaluation}

\noindent To evaluate the forecast accuracy of competing models, we adopt several widely used loss functions for volatility forecasting, including mean squared error (MSE), mean absolute error (MAE). These metrics are given as:

\begin{equation}
\begin{aligned}
 \mathrm{MSE} = \frac{1}{T} \sum_{t=1}^T \left( \widehat{\mathrm{RV}}_t - \mathrm{RV}_t \right)^2,\ 
 \mathrm{MAE} = \frac{1}{T} \sum_{t=1}^T \left| \widehat{\mathrm{RV}}_t - \mathrm{RV}_t \right|, \\
 % \mathrm{MAPE} &= \frac{1}{T} \sum_{t=1}^T \frac{\left| \widehat{\mathrm{RV}}_t - \mathrm{RV}_t \right|}{\mathrm{RV}_t},
\end{aligned}
\end{equation}
where \( T \) denotes the number of trading days in the evaluation sample, \( \mathrm{RV}_t \) is the daily realized volatility on day \( t \), and \( \widehat{\mathrm{RV}}_t \) is the corresponding forecast of \( \mathrm{RV}_t \) generated by the model. In addition to the above metrics, we also compute the Quasi-Likelihood (QLIKE), which belongs to the family of loss functions robust
to a noisy volatility proxy and is particularly suitable for volatility forecasting \citep{patton2011volatility} and defined as:

\begin{equation}
\text{QLIKE} = \frac{1}{N} \sum_{i=1}^N \frac{1}{ {T}_{}} \sum_{t \in {T}} \left[ \frac{\exp\left(\mathrm{RV}_{i,t}^{(b)}\right)}{\exp\left(\widehat{\mathrm{RV}}_{i,t}^{(b)}\right)} - \left( \mathrm{RV}_{i,t}^{(b)} - \widehat{\mathrm{RV}}_{i,t}^{(b)} \right) - 1 \right],
\end{equation}
where \( \widehat{\mathrm{RV}}_{i,t}^{(b)} \) denotes the predicted value of \( \mathrm{RV}_{i,t}^{(b)} \) (the realized volatility of stock \( i \) over the interval \( (t-b, t] \)), \( N \) is the number of stocks in our sample universe, \( \mathcal{T}_{\text{test}} \) represents the testing period, and \( \# \mathcal{T}_{\text{test}} \) is the length (number of trading days) of the testing period.

In practical applications such as trading strategy optimization, the accuracy of volatility level forecasts is not the sole focus—reliably predicting the \textit{direction} of volatility changes (i.e., whether volatility will rise or fall relative to the previous period) is equally critical. To assess this directional predictive power, we further evaluate the mean directional accuracy (MDA) of each model, defined as:

\begin{equation}
\mathrm{MDA} = \frac{1}{T} \sum_{t=1}^T \mathbf{1} \left\{ \operatorname{sgn}\left( \widehat{\mathrm{RV}}_t - \mathrm{RV}_{t-1} \right) = \operatorname{sgn}\left( \mathrm{RV}_t - \mathrm{RV}_{t-1} \right) \right\}
\end{equation}
where \( \mathbf{1}\{\cdot\} \) is the indicator function (taking a value of 1 if the enclosed condition is satisfied and 0 otherwise), and \( \operatorname{sgn}(\cdot) \) is the sign function (returning 1 for positive arguments, -1 for negative arguments, and 0 for zero). MDA quantifies the proportion of trading days where the forecasted direction of volatility change aligns with the actual direction, with values closer to 1 indicating superior directional predictive performance.

To assess the statistical significance of forecast accuracy differences across models, we employ the Diebold-Mariano (DM) test \citep{diebold2002comparing} for distinguishing the forecasting performance of time-series models. Let \( L(e_t) \) denote the loss function associated with forecast error \( e_t \) (e.g., \( L(e_t) = |e_t| \)); the loss difference between the forecasts of models \( a \) and \( b \) is defined as:
\[
d_t^{(a-b)} = L\left(e_t^{(a)}\right) - L\left(e_t^{(b)}\right),
\]
where \( e_t^{(a)} \) (\( e_t^{(b)} \)) represents the forecast error of model \( a \) (model \( b \)), respectively.\footnote{The null hypothesis of the DM test is \( \mathbb{E}\left(d_t^{(a-b)}\right) = 0 \) (i.e., no significant difference in forecast accuracy between the two models). Under the covariance stationary assumption, the test statistic converges to a standard normal distribution:
\[
\mathrm{DM}_{a,b} = \frac{\bar{d}^{(a-b)}}{\widehat{\sigma}^{(a-b)}} \to N(0,1),
\]
where \( \bar{d}^{(a-b)} = \frac{1}{T}\sum_{t=1}^T d_t^{(a-b)} \) is the sample mean of the loss difference \( d_t^{(a-b)} \), and \( \widehat{\sigma}^{(a-b)} \) is a consistent estimate of the standard deviation of \( \bar{d}^{(a-b)} \).}
We adopt the one-sided version of the DM test to evaluate whether a model generates significantly better (rather than merely significantly different) forecasts. Specifically, we test the null hypothesis that model \( a \)'s forecast loss is less than or equal to that of model \( b \); rejecting this null hypothesis implies that model \( b \) yields significantly superior forecasts.

\subsection{In-sample results}
\noindent The in‐sample regression analysis of the different HAR(X) specifications
and simple regressions covers the whole data set from January 2011 to December 2019.
Table \ref{tab:ins fit} presents the in-sample OLS estimation results for 1-day-ahead volatility forecasting models, including the benchmark HAR-RV and five extended specifications (HAR-RV-VIX, HAR-RV-Heston, HAR-RV-Bates, HAR-RV-SVCJ, HAR-RV-RHeston). Heteroskedasticity-robust standard errors \citep{newey1987hypothesis} are reported in parentheses below coefficient estimates, and adjusted \(R^2\) values quantify each model’s in-sample explanatory power.

For the benchmark HAR-RV model, all three horizon-specific volatility components (\(\beta_d\), \(\beta_w\), \(\beta_m\)) are positive and statistically significant, reflecting volatility’s heterogeneous persistence across daily, weekly, and monthly horizons \citep{corsi2009simple}. 
The model achieves an adjusted \(R^2\) of 0.605, establishing a baseline for evaluating extended specifications.
Incorporating the VIX into the framework (HAR-RV-VIX) improves explanatory power (adjusted \(R^2 = 0.627\)). The VIX coefficient is estimated at 0.6772 and highly significant (t-statistic = 10.166), confirming option-based volatility’s incremental gains for explaining realized volatility \citep{bekaert2014vix,todorov2022information,michael2025options,ruan2025merton}. Notably, the monthly component (\(\beta_m\)) becomes insignificant in this model, suggesting VIX absorbs information from longer-term historical realized volatility.

For models integrating stochastic volatility (SV) components, the HAR-RV-Heston specification (adjusted \(R^2 = 0.624\)) includes a statistically significant SV-Heston coefficient (0.5478, t-statistic = 9.469), with two of original HAR components (daily and weekly) remaining positive and significant. The HAR-RV-Bates (adjusted \(R^2 = 0.625\)) and HAR-RV-SVCJ (adjusted \(R^2 = 0.628\)) models follow a similar pattern: their respective Bates (0.5671) and SVCJ (0.5354) coefficients are significant, and the HAR components (daily and weekly) retain their statistical significance. Among all specifications, the HAR-RV-RHeston model outperforms its counterparts, achieving the highest adjusted \(R^2 = 0.644\). Its SV-RHeston coefficient (0.657, t-statistic = 13.767) is both economically meaningful and statistically robust, which highlights the rough volatility component’s capacity to capture refined volatility dynamics \citep{gatheral2018volatility,chong2025nonparametric}.

Across all extended models, the adjusted \(R^2\) values (0.624–0.644) exceed the benchmark HAR-RV’s 0.605, demonstrating that adding either VIX or SV factors enhances in-sample explanatory power. The hierarchical increase in adjusted \(R^2\) (peaking at HAR-RV-RHeston) further indicates that the rough Heston component provides the most incremental information for modeling one-day-ahead realized volatility.

\begin{table}[htbp]
\caption{\textbf{In‐sample 1-day forecast evaluation}}
\par
{\footnotesize
This table reports the 1-day OLS results. 
Values in parentheses indicate the heteroskedasticity‐robust standard errors based on the \cite{newey1987hypothesis} covariance correction. 
}
\noindent
\begin{center}
{\scriptsize   
\begin{tabular}{lcccccc}
\toprule
Model          & HAR-RV   & HAR-RV-VIX & HAR-RV-Heston & HAR-RV-Bates & HAR-RV-SVCJ & HAR-RV-RHeston \\
\midrule
intercept      & -0.2264  & -0.0785    & -0.1572       & -0.1500      & -0.2375     & -0.2576        \\
               & (-4.146) & (-1.427)    & (-2.926)      & (-2.789)     & (-4.482)    & (-4.967)       \\
$\beta_d$      & 0.4624   & 0.3792     & 0.3884        & 0.3946       & 0.4757      & 0.3105         \\
               & (16.651) & (13.45)    & (13.782)      & (13.62)      & (13.379)    & (10.871)       \\
$\beta_w$      & 0.2572   & 0.157      & 0.1642        & 0.168        & 0.1552      & 0.0396         \\
               & (0.6088) & (3.719)    & (3.878)       & (3.981)      & (3.681)     & (2.898)        \\
$\beta_m$      & 0.2038   & -0.0135    & 2.45          & 0.0121       & 0.0074      & 0.0328         \\
               & (5.471)  & (-0.321)    & (0.598)       & (0.292)      & (0.181)     & (0.0874)       \\
VIX            &          & 0.6772     &               &              &             &                \\
               &          & (10.166)   &               &              &             &                \\
SV-Heston      &          &            & 0.5478        &              &             &                \\
               &          &            & (9.469)       &              &             &                \\
SV-Bates       &          &            &               & 0.5671       &             &                \\
               &          &            &               & (9.619)      &             &                \\
SV-SVCJ        &          &            &               &              & 0.5354      &                \\
               &          &            &               &              & (10.333)    &                \\
SV-RHeston     &          &            &               &              &             & 0.657          \\
               &          &            &               &              &             & (13.767)       \\
Adj \(R^2\)    & 0.605    & 0.627      & 0.624         & 0.625        & 0.628       & 0.644          \\
\bottomrule
\end{tabular}        
}
\end{center}
\label{tab:ins fit}
\end{table}

\subsection{Out-of-sample}
\noindent Comparing the out‐of‐sample forecast performance of the different forecasting methods is even more important and
informative for analyzing the information content of the spot volatility estimators inferred from stochastic volatility models, especially rough Heston model. Our out-of-sample period is from Januray 2020 to June 2020 used the predictive regression model trained from the whole in-sample period.
Still, in addition to the HAR(X) models,
and simple regressions, we also use the non-parametric option-based volatility estimator VIX as benchmarks for the forecasts.
Table \ref{tab:oos fit} lists the one-day-ahead forecast accuracy for the candidate models across several evaluation metrics including MAE, RMSE, QLIKE, and MDA as described in Section \ref{sec:forecast_evaluation}.

First, all extended models outperform the benchmark HAR-RV across every metric. The HAR-RV-RHeston model achieves the lowest MAE (0.2137), RMSE (0.2762), and QLIKE (0.0403). It realizes a 9\% reduction in MAE and a 5.0\% reduction in RMSE relative to HAR-RV. This confirms that the rough Heston component captures incremental informational content of volatility dynamics, which in turn improves predictive accuracy. The HAR-RV-VIX model also outperforms HAR-RV, with MAE of 0.2205 and RMSE of 0.2798. This result is consistent with the in-sample finding that VIX contains valuable option-implied information for one-day-ahead volatility forecasting.
Second, the rough Heston model dominates in directional accuracy. The HAR-RV-RHeston model achieves the highest MDA at 73.60\%. This metric indicates the model correctly predicts the direction of volatility changes (rise or fall) in 73.60\% of trading days, which is 5.6 percentage points higher than the HAR-RV benchmark. This performance is particularly meaningful for practical applications such as trading strategy optimization, where directional signals are often as valuable as level forecasts.
Third, stochastic volatility (SV) models show heterogeneous performance. Models with standard SV components, including HAR-RV-Heston, HAR-RV-Bates, and HAR-RV-SVCJ, deliver similar accuracy. Their MAE values range from 0.2190 to 0.2221, and RMSE values range from 0.2776 to 0.2815. None of these models, however, match the performance of the rough Heston model. This outcome highlights that incorporating rough volatility dynamics provides unique predictive value for short-term volatility, distinct from the value offered by traditional SV structures.

\begin{table}[htbp]
\caption{\textbf{Ouf-of‐sample 1-day forecast evaluation}}
\par
{\footnotesize
This table reports the 1-day OLS results. 
Values in parentheses indicate the heteroskedasticity‐robust standard errors based on the \cite{newey1987hypothesis} covariance correction. 
}
\noindent
\begin{center}
{\scriptsize   
\begin{tabular}{rrrrrrr}
\toprule
Model  & HAR-RV   & HAR-RV-VIX & HAR-RV-Heston & HAR-RV-Bates & HAR-RV-SVCJ & HAR-RV-RHeston \\
\midrule
MAE    & 0.2351   & 0.2205     & 0.2221        & 0.2199       & 0.2190      & 0.2137         \\
RMSE   & 0.2904   & 0.2798     & 0.2815        & 0.2788       & 0.2776      & 0.2762         \\
QLIKE  & 0.0428   & 0.0409     & 0.0414        & 0.0406       & 0.0404      & 0.0403         \\
MDA    & 68.00\%  & 71.20\%    & 70.40\%       & 68.80\%      & 70.40\%     & 73.60\%        \\
\bottomrule
\end{tabular}         
}
\end{center}
\label{tab:oos fit}
\end{table}

Apart from average forecast performance metrics, we also examine the distribution of forecast errors (via MAE and QLIKE) to assess forecast accuracy across the test period, as shown in Figure \ref{fig:boxplot}.
This figure presents boxplots of error metrics, where each box depicts the median (solid black line), 25th percentile (Q1), and 75th percentile (Q3) of errors for each model. 
For MAE errors (left panel): The HAR-RV-RHeston model exhibits the smallest median error, although the tightness of the interquartile range (IQR, i.e., the height of the box) across all candidate models are quite similar, indicating similar error stability. 
In contrast, the benchmark HAR-RV has the largest median error.
For QLIKE errors (right panel): The pattern mirrors MAE: HAR-RV-RHeston has the smallest median error.

\begin{figure}[htbp]
\begin{center}
\includegraphics[width=0.80\textwidth]{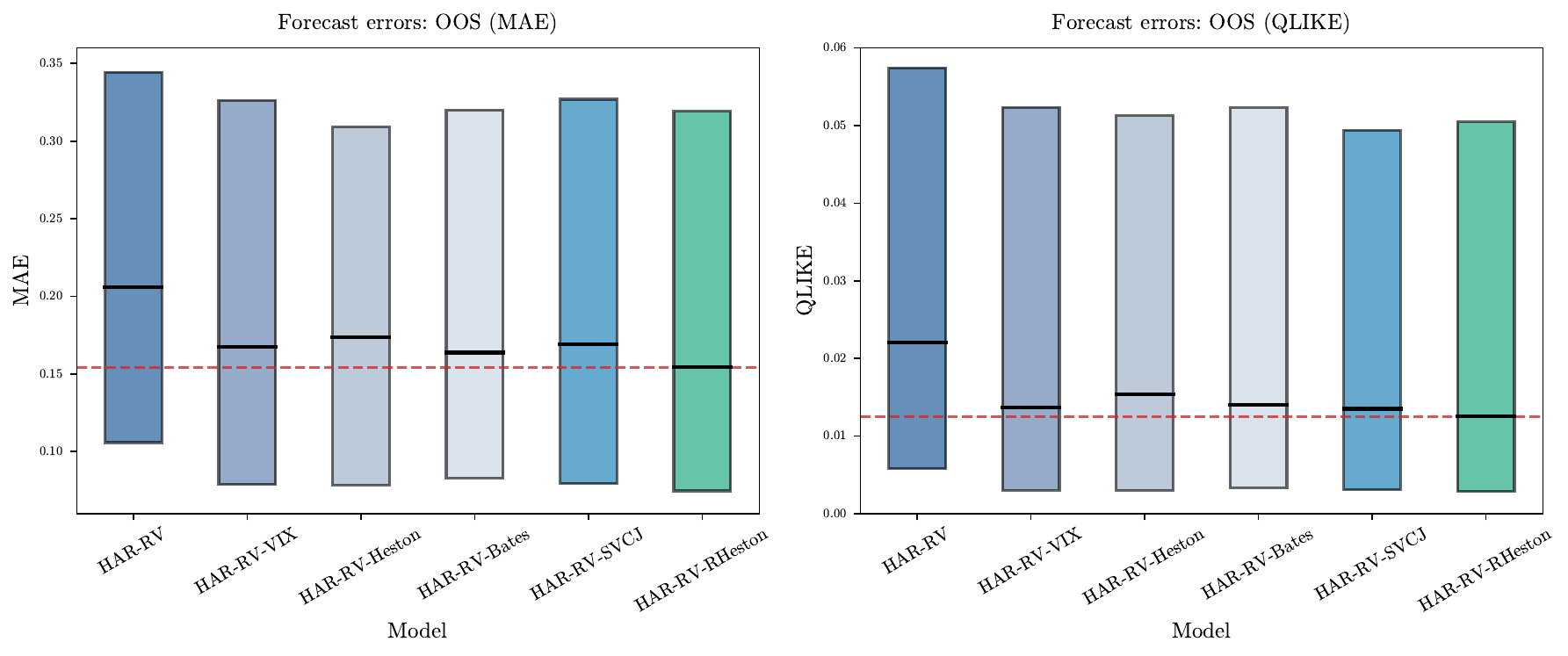}
\end{center}
\caption{\textbf{Boxplots of forecast errors for models}}
{
\footnotesize
\noindent
\begin{spacing}{1.4}
This figure presents boxplots illustrating three summary statistics: the median, and the Q1 and Q3 quantiles. 
\end{spacing}
}
\label{fig:boxplot}
\end{figure}

Table \ref{tab:DM} presents the results of pairwise Diebold-Mariano (DM) tests \citep{diebold2002comparing} for 1-day-ahead volatility forecasts. The DM test is a standard tool for statistically comparing forecast accuracy, where a negative test statistic in cell \((i,j)\) indicates that model \(i\) outperforms model \(j\) at different significance levels. 
As shown in the first column of Table \ref{tab:DM}, all extended models generate negative DM statistics relative to the benchmark HAR-RV: for instance, HAR-RV-RHeston yields a statistic of -3.1179, HAR-RV-Bates of -3.2297, and HAR-RV-SVCJ of -3.259, confirming that incorporating either VIX or stochastic volatility (SV) components into the HAR framework leads to statistically significant improvements in forecast performance. 
Notably, the HAR-RV-RHeston model exhibits significantly negative statistics against all other extended specifications: it outperforms HAR-RV-VIX (-2.5446), HAR-RV-Heston (-2.5907), HAR-RV-Bates (-1.7681), and HAR-RV-SVCJ (-1.8359), demonstrating that the rough Heston component provides unique, statistically meaningful incremental value beyond both non-parametric option-implied information (VIX) and traditional SV structures. Among other extended models, HAR-RV-SVCJ outperforms HAR-RV-Heston (-1.6248) and HAR-RV-Bates (-0.4571), while HAR-RV-Bates outperforms HAR-RV-Heston (-1.1208); notably, HAR-RV-Heston produces a positive statistic (1.193) when compared to HAR-RV-VIX, suggesting that VIX’s option-implied information may be more valuable for short-term forecasting than the standard Heston component.
Collectively, these DM test results validate the robustness of the HAR-RV-RHeston model’s superior performance—its predictive advantage over both the benchmark and other extended models is statistically significant, reinforcing the value of integrating rough volatility dynamics into volatility forecasting frameworks. In summary, the out-of-sample results reinforce the in-sample findings. Extending the HAR framework with either VIX or SV components improves forecast accuracy. Among all specifications, the rough Heston model is the most effective.

\begin{table}[htbp]
\caption{\textbf{Pairwise comparisons: forecast accuracy}}
\par
{\footnotesize
This table shows the \cite{diebold2002comparing} test statistics of the pricing errors ($\text{MAE}$) differences. A negative (positive) statistic in a cell $(i, j)$ indicates that the model $i$ outperforms (underperforms) the model $j$. 
}
\noindent
\begin{center}
{\scriptsize   
      \begin{tabular}{rrrrrr}  
\toprule
Model  
& HAR-RV   & HAR-RV-VIX & HAR-RV-Heston 
& HAR-RV-Bates & HAR-RV-SVCJ \\
\midrule
HAR-RV 
& -        &            &               &              &             \\
HAR-RV-VIX     & -2.9456***& -          &               &              &             \\
HAR-RV-Heston  & -2.7095***& 1.193      & -             &              &             \\
HAR-RV-Bates   & -3.257*** & -0.3013    & -1.1208       & -            &             \\
HAR-RV-SVCJ    & -3.2299***& -1.0397    & -1.6248*      & -0.4571      & -           \\
HAR-RV-RHeston & -3.1179***& -2.5446*** & -2.5907***    & -1.7681**    & -1.8359**   \\
\bottomrule
\end{tabular}
}
\end{center}
\label{tab:DM}
\end{table}

\subsection{Multiple horizons of forecasting power}

\noindent One-day-ahead volatility forecasting is rarely the sole focus for market practitioners. Multi-horizon predictions (e.g., weekly or monthly) sometimes can be equally critical for algorithmic trading strategy optimization and risk management. Following the convention in prior literature \citep{zhang2025forecasting}, we extend our analysis to {$h$-day-ahead forecasts} (where $h = 1, 2, \dots, 21$), with the target volatility defined as the cumulative realized volatility over the horizon: 
$
RV_{t: t+h} = \sum_{k=0}^h RV^d_{t+k}
$

We evaluate model performance across 22 horizons using four metrics (MAE, QLIKE, RMSE, and MDA; see Section \ref{sec:forecast_evaluation}), benchmarking the proposed HAR-RV-RHeston against two baselines: the original HAR-RV, and HAR-RV-VIX (which incorporates a non-parametric option-implied volatility estimator). The results are visualized in Figure \ref{fig:horizons}.

As evident from the plots, for {error metrics (MAE, QLIKE, RMSE)}, HAR-RV-RHeston consistently yields \textit{lower values} across all horizons (1--22 days) compared to both benchmarks; even in short horizons (e.g., $h \leq 5$), where all models show a sharp initial drop in error, HAR-RV-RHeston maintains a clear advantage, and this outperformance persists (and in some cases widens) as the forecasting horizon extends. For {directional accuracy (MDA)}, HAR-RV-RHeston delivers {higher MDA values} for most horizons, reflecting more reliable predictions of volatility's upward and downward trends, though it exhibits minor dips in MDA for a few mid-horizon cases (e.g., $10 \leq h \leq 15$), with its overall directional performance still outpacing the two baseline models.
In sum, the spot volatility estimated from the rough Heston model provides stronger explanatory and predictive power than the VIX-based counterpart \textit{across nearly all forecasting horizons}---as corroborated by the lower error and (mostly) higher directional accuracy of HAR-RV-RHeston.

\begin{figure}[htbp]
\begin{center}
\includegraphics[width=1.00\textwidth]{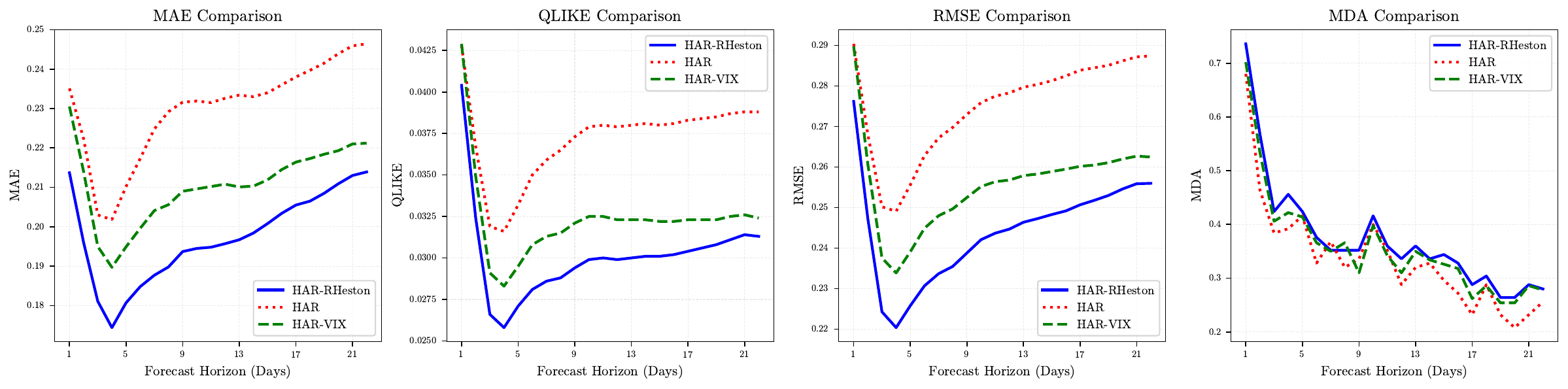}
\end{center}
\caption{\textbf{Multi-horizon forecasting performance}}
{
\footnotesize
\noindent
\begin{spacing}{1.4}
This figure compares the performance of HAR-RV-RHeston (blue), HAR-RV (red dashed), and HAR-RV-VIX (green dashed) across 1--22 day-ahead forecasting horizons, using MAE, QLIKE, RMSE, and MDA. The target volatility for horizon $h$ is defined as the cumulative realized volatility $RV_{t:t+h} = \sum_{k=0}^h RV^d_{t+k}$. 
\end{spacing}
}
\label{fig:horizons}
\end{figure}

%-----Conclusions-------------------------
%-----------------------------------------
\subsection{Future research}
\noindent Rough volatility models hold profound implications for asset pricing \citep{bouchaud2020perfect}. A key recent advancement is the quadratic rough Heston model \citep{gatheral2020quadratic}, which resolves a long-standing challenge in quantitative finance: it enables the joint calibration of S\&P 500 and VIX option smiles, an achievement that eluded practitioners previously due to conflicting market dynamics.
Given this research letter’s focus on documenting rough volatility’s informational gains, we confined our analysis to the standard rough Heston model, which offers moderate analytical tractability. This choice stems from computational constraints: the quadratic rough Heston model’s greater complexity, combined with our long-span option panel, incurs substantial computational costs even with advanced deep learning techniques \citep{rosenbaum2021deep}. The standard specification aligns with our core goal of isolating rough volatility’s incremental informational value.
We anticipate extending this framework to the quadratic rough Heston model will further enhance realized volatility forecasting. Future research could prioritize optimizing computational efficiency for long-span datasets (e.g., via rough volatility-tailored accelerated numerical schemes) to fully exploit the quadratic specification’s superior calibration capabilities. Such an extension may deepen insights into rough volatility dynamics and their practical relevance for asset pricing and risk management.

\section{Concluding remarks}
\label{section-conclusion}
\noindent 
This paper explores the informational gains of rough volatility dynamics for realized volatility forecasting by augmenting the Heterogeneous Autoregressive (HAR) model with model-based spot volatility estimators extracted from traded options data. We address the computational challenges of rough volatility models via a deep learning surrogate, and systematically compare the performance of the rough Heston model against traditional stochastic volatility models (Heston, Bates, SVCJ) and the non-parametric VIX index.

Our key findings are threefold. First, the augmented HAR-RV-RHeston model outperforms all benchmarks in-sample and out-of-sample: it achieves the highest adjusted \(R^2\) (0.644) in-sample and the lowest out-of-sample forecast errors (MAE=0.2137, RMSE=0.2762, QLIKE=0.0403), with significant error reductions relative to the baseline HAR-RV. Second, it delivers superior directional accuracy (MDA=73.60\%), 5.6 percentage points above HAR-RV, underscoring its practical value for trading strategies and risk management. Third, its outperformance persists across 1–22 day multi-horizon forecasts, confirming the robustness of rough volatility dynamics.
Methodologically, we contribute to volatility forecasting literature by integrating rough volatility into the HAR framework, pairing \cite{andersen2015parametric}’s parametric inference with deep learning to resolve computational constraints of rough volatility models. This approach enriches the forecasting information set and provides a feasible solution for large-scale option panels. Empirically, we validate that option-implied spot volatility from rough Heston models contains unique informational value beyond traditional stochastic volatility components and the non-parametric VIX, highlighting the critical role of volatility roughness.

%----------------------------------------------------------------------------------
% \clearpage
% \nocite{}
% \small
% \bibliographystyle{jfe}
% \begingroup
% \setstretch{1.2}
% %\setlength\bibitemsep{0pt}
% \bibliography{ref}
% \endgroup
% \newpage
\nocite{}
\small
\bibliography{ref}
%\clearpage

% Figure
\newpage
~\\

\noindent \textbf{Acknowledgement}

\noindent 
Meng Melody Wang would like to extend her deepest gratitude to Dr. Martin Forde for his attentive supervision during her postgraduate studies at King's College London, which helped her gain access to the research domain of deep calibration and rough volatility. 
% Some of the ideas presented in this paper were developed during that period.
\\

\noindent
\textbf{Disclaimer}

\noindent
The views expressed here are those of the authors and not necessarily those of Green Pine Capital and Green Pine Innorise Fund.\\

\noindent \textbf{Data availability}\\
\noindent 
Data will be made available on request.\\

\noindent \textbf{Conflict of Interest Statement}\\
\noindent 
There is no conflict of interest to declare.\\

\noindent \textbf{Funding Information}\\
\noindent 
This work was supported by the Beijing Normal-Hong Kong Baptist University start-up research fund under Grant UICR0700136-26.
%--------------------------------------
\newpage
%--------------------------------------
\begin{appendices}
\section{Stochastic volatility model settings}
\label{sec:appendix SV}
\subsection{Formulation of affine stochastic volatility models with jumps}
\label{appendix:model details}
%------------------------------
\noindent Under the class of affine jump-diffusion models, the joint process of $S_t$ and $V_t$ under a risk neutral measure $\mathbb{Q}$ is specified by the dynamic equations:

\begin{equation}
\begin{aligned}
\frac{\dd S_t}{S_t} & =(r-\eta \mu_S) \dd t+\sqrt{V_t} \dd W_t^1+\left(e^{J^S}-1\right) \dd N_t \\
\dd V_t & =\lambda\left(\theta-V_t\right) \dd t+\nu \sqrt{V_t} \dd W_t^2+J^V \dd N_t,
\end{aligned}
\end{equation}
where the Brownian motions $W_t^1$ and $W_t^2$ observe $\dd W_t^1 \dd W_t^2=\rho \dd t$. Here, $\rho$ is the constant correlation coefficient, $\theta$ is the constant mean reversion level of $V_t, \nu$ is the constant volatility of $V_t$ and $\lambda$ is the constant mean reversion speed. We assume simultaneous jumps on $S_t$ and $V_t$ and they are modeled by the common Poisson process $N_t$ with constant intensity $\eta$. 
Let $J^S$ and $J^V$ denote the respective random jump component on $S_t$ and $V_t$, where $J^S$ and $J^V$ are independent of $N_t$ and both random jump components are contemporaneous. Furthermore, we assume $J^V$ to be exponentially distributed with mean $\mu_V$, 
$
J^V \sim \exp \left(\mu_V\right),
$
and $J^S$ is normally distributed 
$J^S  \sim N\left(\mu_S, \sigma_S^2\right).$
The full specification is called the SVCJ model. By removing the jump component in the variance process, we obtain the Bates model. The details of the model features are summarized in Table \ref{tab:models}. The top panel shows which model features are included in the nested models. The checkmark indicates that the model includes the corresponding features. The bottom panel shows which model parameters are excluded.  

\begin{table}[htbp]
\caption{\textbf{Comparison of model features of RHeston model and other three Heston-type models}}
\par
{\footnotesize

}
\noindent
\begin{center}
{\scriptsize   
\begin{tabular}{ccccc}
\toprule
           &     Heston &      Bates &   SVCJ &    RHeston \\
\midrule
                           \multicolumn{ 5}{c}{Model features} \\
\midrule
rough variance &            &            &            &          $\checkmark$ \\

return jump &            &          $\checkmark$ &          $\checkmark$ &            \\

variance jump &            &            &          $\checkmark$ &            \\
\midrule
                \multicolumn{ 5}{c}{Excluded model parameters} \\
\midrule
         $H$ &          \ding{55} &          \ding{55} &          \ding{55} &            \\

      $\eta$ &          \ding{55} &            &            &          \ding{55} \\

     $\mu_S$ &          \ding{55} &            &            &          \ding{55} \\

     $\mu_V$ &          \ding{55} &          \ding{55} &            &          \ding{55} \\
\bottomrule
\end{tabular}          
}
\end{center}
\label{tab:models}
\end{table}
% \subsubsection{Lifted Heston: An efficient solution for rough Heston}

\section{Options data}
\label{sec:options data}
\noindent 
Our SPX options datasets comprise the put and call options prices and T-bill rates sourced from OptionMetrics. 
The sample periods are the same as the return dataset used for computing realized volatilities. We implement several filters before empirical analysis mainly for eliminating inaccurate and illiquid options following \citep{Bakshi1997,christoffersen2009shape} among others.  
We restrict the analysis to time to maturities between one week and one year.
We delete options that report null or zero open interests, trading volumes, implied volatility, and vega on a given date. We delete options for which the mid-price is too small or the relative bid-ask spread is larger than 0.3.
Option prices are taken from the bid-ask midpoint at each day’s close of the options market. 
We proxy for the riskless interest rate by using the daily available T-bill rates, which we interpolate to match the maturities of the options contracts that we observe.
We also discard the options that violate the model-free lower bound constraints:
\begin{equation}
\begin{aligned}
& \widehat{C}(t, T, K) \geq \max \left(0, S-K\right) \\
& \widehat{P}(t, T, K) \geq \max \left(0, K-S\right)
\end{aligned}
\end{equation}
where $\widehat{C}(t, T, K),\widehat{P}(t, T, K)$ denote the options at date $t$ with maturity $T$.
Moreover, OTM options tend to be more liquid than in-the-money ones. For this reason, we use only OTM call and put options. 
Table \ref{tab:options data} shows summary statistics such as the number of observations and the average implied volatilities. In this table, the data are firstly divided into several categories depending on time to maturity $\tau$, we further break down the data into several categories according to their moneyness $K / S$.

\begin{table}[htbp]
\caption{\textbf{Summary statistics for SPX options data}}
\par
{\footnotesize
This table shows the descriptive statistics of SPX options prices obtained from OptionMetrics. 
}
\noindent
\begin{center}
{\scriptsize   
\begin{tabular}{rrrrrrrrrrr}
\toprule
      $\tau$ (days) &            &  $7<\tau<45$ &            &  $45<\tau<90$ &            & $90<\tau<180$ &            & $180<\tau<360$ &            &        All \\
      \midrule
 \multicolumn{ 11}{c}{Panel A: Number of contracts} \\
\midrule
$K/S < 0.975$ &            &     137141 &            &      76515 &            &      48478 &            &      35562 &            &     357545 \\

$0.975 <= K/S < 1$ &            &     296478 &            &     156179 &            &      74783 &            &      32226 &            &     681286 \\

$1 <= K/S < 1.025$ &            &     398455 &            &     175466 &            &      72064 &            &      29874 &            &     852323 \\

$K/S >= 1.025$ &            &     446602 &            &     319037 &            &     171075 &            &      95675 &            &    1164108 \\

       All &            &    1278676 &            &     727197 &            &     366400 &            &     193337 &            &    3055262 \\

       \midrule
  \multicolumn{ 11}{c}{Panel B: Average IV} \\
\midrule
X/F $<$ 0.975 &            &    0.2249  &            &    0.2202  &            &    0.2326  &            &    0.2206  &            &    0.2296  \\

0.975 $<=$ X/F $<$ 1 &            &    0.1661  &            &    0.1702  &            &    0.1834  &            &    0.1865  &            &    0.1710  \\

1 $<=$ X/F $<$ 1.025 &            &    0.1389  &            &    0.1478  &            &    0.1706  &            &    0.1795  &            &    0.1461  \\

X/F $>=$ 1.025 &            &    0.1625  &            &    0.1460  &            &    0.1563  &            &    0.1585  &            &    0.1627  \\

       All &            &    0.1627  &            &    0.1595  &            &    0.1748  &            &    0.1779  &            &    0.1678  \\
\bottomrule
\end{tabular}      
}
\end{center}
\label{tab:options data}
\end{table}

\section{Analytical option pricing and fast Fourier transform}
\label{sec:ChF}
\begin{lemma}[Characteristic function under the rough Heston model]

\begin{equation}
\mathbb{E}\left[e^{\i a X_t}\right]=\exp \left(g_1(a, t)+V_0 g_2(a, t)\right),
\end{equation}
where
$$
g_1(a, t)=\theta \lambda \int_0^t h(a, s) \dd s, \quad g_2(a, t)=I^{1-\alpha} h(a, t),
$$
and $h$ is a solution of the following fractional Riccati equation:
\begin{equation}
D^\alpha h=\frac{1}{2}\left(-a^2-\i  a\right)+\lambda(\i a \rho \nu-1) h(a, s)+\frac{(\lambda \nu)^2}{2} h^2(a, s), \quad I^{1-\alpha} h(a, 0)=0,
\end{equation}
with $D^\alpha$ and $I^{1-\alpha}$, for $\alpha \in(0,1]$, the fractional derivative and integral operators defined as
$$
I^\alpha f(t)=\frac{1}{\Gamma(\alpha)} \int_0^t(t-s)^{\alpha-1} f(s) \dd s, \quad
D^\alpha f(t)=\frac{1}{\Gamma(1-\alpha)} \frac{\dd}{\dd t} \int_0^t(t-s)^{-\alpha} f(s) \dd s.
$$
% and
% $$
% D^r f(t)=\frac{1}{\Gamma(1-r)} \frac{\dd}{\dd t} \int_0^t(t-s)^{-r} f(s) \dd s.
% $$
Remark that when $\alpha=1$, this result indeed coincides with the classical Heston's result. However, note that for $\alpha<1$, the solutions of such Riccati equations are no longer explicit. Nevertheless, they are easily solved numerically via special computational methods.
\end{lemma}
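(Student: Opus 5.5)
The plan is to follow El Euch and Rosenbaum and prove the formula first for the Markovian approximations, then pass to the limit. Here $X_t=\log(S_t/S_0)$ with $r=0$ (discounting can be restored afterwards). The kernel in \eqref{eq-rheston} is $K(t)=\lambda t^{\alpha-1}/\Gamma(\alpha)$ with $\alpha=H+1/2$, and in the convention of the lemma the volatility-of-variance coefficient is $\lambda\nu$. The case $\alpha=1$ reduces to the classical Heston affine computation and serves as a consistency check.

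\textbf{Step 1: the Markovian approximation.} Approximate the fractional kernel by the multi-factor exponential kernels $K^n(t)=\sum_i c_i e^{-x_i t}$ used in the lifted model \eqref{eq-lheston-cj}. In that model the state $(X,U^1,\dots,U^n)$ is a finite-dimensional affine diffusion. Standard affine theory then gives
\[
\mathbb{E}[e^{\i a X_t}]=\exp\bigl(\phi^n(t)+V_0\,\psi^n(t)\bigr),
\]
where $\psi^n$ solves a Riccati system in the factors.

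Aggregating the factors, the function $h^n=\sum_i c_i\psi^n_i$ satisfies a Volterra–Riccati integral equation
\[
h^n(t)=\int_0^t K^n(t-s)\,F\bigl(a,h^n(s)\bigr)\dd s,
\]
with $F(a,u)=\tfrac12(-a^2-\i a)+\lambda(\i a\rho\nu-1)u+\tfrac{(\lambda\nu)^2}{2}u^2$. The same computation yields $\phi^n(t)=\theta\lambda\int_0^t h^n$ and shows that the coefficient of $V_0$ is the integrated form. I would check this directly by applying It\^o's formula to $\exp\bigl(\i aX_s+\text{Volterra expression}\bigr)$ and verifying that it is a martingale on $[0,t]$.

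\textbf{Step 2: passage to the limit.} Two convergence statements are needed.

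\emph{Process level.} As $K^n\to K$ in $L^2_{loc}$, show $V^n\to V$ and $X^n\to X$ in law. This uses stability results for stochastic Volterra equations (Abi Jaber–El Euch), based on tightness from Kolmogorov moment bounds.

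\emph{Equation level.} Show that $h^n\to h$, the solution of $h=K*F(a,h)$. With the fractional kernel, convolution by $K$ is $\lambda$ times $I^\alpha$, so applying $D^\alpha$ turns the integral equation into the stated fractional Riccati equation with $I^{1-\alpha}h(0)=0$. This identification gives $g_2=I^{1-\alpha}h$ as the coefficient of $V_0$, using $g_0(t)=V_0+\dots$, and $g_1=\theta\lambda\int_0^t h$.

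\textbf{Main obstacle: global existence and control of $h$.} Because of the quadratic term, the Riccati equation could blow up. For real-argument characteristic functions ($a\in\mathbb{R}$) I would establish global existence on $[0,T]$ by deriving a priori bounds: $\mathrm{Re}\,h\le 0$ follows from a comparison principle for Volterra equations with nonnegative kernel. This rules out explosion and also gives the uniform-in-$n$ bounds needed for dominated convergence when passing to the limit $\exp(\phi^n+V_0\psi^n)\to\exp(g_1+V_0g_2)$.

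Should comparison prove delicate, the fallback is the original route of El Euch–Rosenbaum: view rough Heston as the scaling limit of nearly-unstable Hawkes processes, whose characteristic functionals are explicit. That route sidesteps the martingale verification but needs the same limit analysis.
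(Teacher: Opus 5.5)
The paper gives no argument of its own. It defers to \cite{el2019characteristic}, and that proof is the route you keep as a fallback: rough Heston is obtained as the scaling limit of price models driven by nearly unstable Hawkes processes, whose characteristic functions are computed explicitly and shown to converge to $\exp(g_1+V_0g_2)$. Your main route (Markovian lift, finite-dimensional exponential-affine formula, aggregation into $h^n=K^n*F(a,h^n)$, then a double limit) is genuinely different and viable. Its ingredients, stability of Volterra SDEs under kernel approximation and convergence of the Riccati solutions, are available in the multifactor-approximation literature.

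The two routes trade off as follows. In the Hawkes route, existence of $h$ comes out of the limiting procedure itself, but it is tied to the power kernel and to a heavy limit theorem. Your route works for any completely monotone kernel and connects directly to the lifted model the paper actually prices with. It shifts the burden to two things: (a) global existence and $n$-uniform bounds for the Volterra--Riccati equation, and (b) identifying the weak limit in Step 2, which needs uniqueness in law of the rough Volterra equation. The standard way to get (b) is an exponential-affine verification for an arbitrary weak solution. So once (a) is in hand, you can run the It\^o/martingale check you plan at level $n$ directly on the rough model, and the approximation becomes optional.

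Two points need repair.

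\textbf{Normalization slip.} You take $K(t)=\lambda t^{\alpha-1}/\Gamma(\alpha)$ but keep the lemma's coefficients $\lambda(\mathrm{i}a\rho\nu-1)$ and $(\lambda\nu)^2/2$ in $F$, so $\lambda$ is counted twice. Then $h=K*F(a,h)=\lambda I^\alpha F(a,h)$ gives $D^\alpha h=\lambda F(a,h)$, not the stated equation. In the lemma's convention the kernel is the bare $t^{\alpha-1}/\Gamma(\alpha)$, the drift is $\lambda(\theta-V)$, and the vol-of-vol is $\lambda\nu$. Then $h=I^\alpha F(a,h)$ is exactly the stated equation with $I^{1-\alpha}h(a,0)=0$. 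With $g_0(t)=V_0+\lambda\theta t^\alpha/\Gamma(1+\alpha)$, the exponent $\int_0^t F(a,h(a,t-s))\,g_0(s)\,\mathrm{d}s$ equals $V_0\,I^{1-\alpha}h(a,t)+\theta\lambda\int_0^t h(a,s)\,\mathrm{d}s$, since $I^1=I^{1-\alpha}I^\alpha$ and $I^{1+\alpha}=I^1I^\alpha$.

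\textbf{Main obstacle.} Comparison for Volterra equations needs more than a nonnegative kernel. Complete monotonicity is enough, and both the power kernel and $K^n$ with $c_i>0$ have it. Also, $\mathrm{Re}\,h\le0$ is not yet a bound on $h$: a priori the solution could still explode through $\mathrm{Im}\,h$, dragging $\mathrm{Re}\,h$ to $-\infty$. You need a modulus estimate. Write $F(a,h)=f_0+g\,h$ with $f_0=\tfrac12(-a^2-\mathrm{i}a)$ and $g=\lambda(\mathrm{i}a\rho\nu-1)+\tfrac{(\lambda\nu)^2}{2}h$. Once $\mathrm{Re}\,h\le0$, you have $\mathrm{Re}\,g\le-\lambda$, so it remains to bound solutions of linear Volterra equations with such a dissipative coefficient. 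For $\alpha=1$ this is just $\tfrac{\mathrm{d}}{\mathrm{d}t}|h|\le|f_0|$. It is this bound, uniform in $n$, that feeds the stability estimate giving $h^n\to h$.
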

\begin{proof}
See \cite{el2019characteristic}.
\end{proof}

\begin{lemma}[Characteristic function under the lifted Heston model]
\begin{equation}
\mathbb{E}\left[\exp \left(u \log S_t^n\right) \mid \mathcal{F}_t\right]=\exp \left(\phi^n(t, T)+u \log S_t^n+\sum_{i=1}^n c_i^n \psi^{n, i}(T-t) U_t^{n, i}\right)
\end{equation}
\end{lemma}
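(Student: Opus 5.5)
The approach is the standard affine exponential-martingale argument for the lifted Heston model \eqref{eq-lheston-cj}. Take $u$ with $\operatorname{Re}u\in[0,1]$ and work with $X_t=\log S_t^n$. Look for deterministic functions $\phi^n(\cdot,T)$ and $\psi^{n,i}(\cdot)$, $i=1,\dots,n$, such that
\begin{equation*}
M_t=\exp\Big(\phi^n(t,T)+uX_t+\sum_{i=1}^n c_i^n\psi^{n,i}(T-t)U_t^{n,i}\Big)
\end{equation*}
is a martingale on $[0,T]$ with terminal value $M_T=\exp(uX_T)$. Then $\mathbb{E}[e^{uX_T}\mid\mathcal{F}_t]=M_t$, which is the claimed formula. (The statement as printed has $\log S_t^n$ inside the expectation; the intended left-hand side is $\log S_T^n$.) The terminal conditions $\phi^n(T,T)=0$ and $\psi^{n,i}(0)=0$ ensure the terminal value is right.

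\textbf{Step 1: Itô's formula.} Write $V_t=g_0^n(t)+\sum_i c_i^nU_t^{n,i}$ and apply Itô's formula to $\log M_t$. By Itô, $\dd X_t=(r-\tfrac12V_t)\dd t+\sqrt{V_t}\dd B_t$. Each $U^{n,i}$ has drift $-x_iU^{n,i}_t-\lambda V_t$ and diffusion $\nu\sqrt{V_t}\dd W_t$, and all the factors share the same $W$.

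\textbf{Step 2: drift of $M_t/M_t$.} Set $\Psi(\tau)=\sum_ic_i^n\psi^{n,i}(\tau)$. The drift of $M$ relative to $M$ is
\begin{equation*}
\partial_t\phi^n+ur+\sum_ic_i^n\big(-\dot\psi^{n,i}-x_i\psi^{n,i}\big)U^{n,i}_t+V_t\Big[\tfrac12(u^2-u)-\lambda\Psi+\rho\nu u\Psi+\tfrac{\nu^2}{2}\Psi^2\Big].
\end{equation*}
The key structural observation is that $V_t$ is itself affine in the factors. Substituting $V_t=g_0^n(t)+\sum_jc_j^nU^{n,j}_t$ and requiring the coefficient of each $U^{n,i}$ to vanish gives the system of Riccati ODEs
\begin{equation*}
\dot\psi^{n,i}(\tau)=-x_i\psi^{n,i}(\tau)+F\big(u,\Psi(\tau)\big),\qquad F(u,y)=\tfrac12(u^2-u)+(\rho\nu u-\lambda)y+\tfrac{\nu^2}{2}y^2,
\end{equation*}
with $\psi^{n,i}(0)=0$. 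The constant term then gives
\begin{equation*}
\phi^n(t,T)=\int_0^{T-t}\Big(ur+F\big(u,\Psi(s)\big)\,g_0^{n\prime}(T-s)\Big)\dd s,
\end{equation*}
up to the usual rearrangement. Here it is convenient to use the form of $g_0^n$ chosen following \cite{AbiJaber2019}, which yields
\begin{equation*}
\phi^n=urT'+\int_0^{T-t}F\big(u,\Psi(s)\big)\,g_0^n(T-t-s)\dd s,
\end{equation*}
with $T'$ the remaining maturity.

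\textbf{Step 3: existence and the martingale property.} This is the main obstacle. The $n$-dimensional Riccati system has a quadratic nonlinearity in $\Psi$, so global existence on $[0,T]$ is not automatic. I would first reduce to a scalar Volterra equation:
\begin{equation*}
\Psi(\tau)=\int_0^\tau K^n(\tau-s)\,F\big(u,\Psi(s)\big)\dd s.
\end{equation*}
I would then invoke the a priori bound $\operatorname{Re}\Psi\le0$ for $\operatorname{Re}u\in[0,1]$, following \cite{abi2019multifactor}. Since $K^n$ is completely monotone, this rules out blow-up.

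With $\psi$ bounded, the local martingale $M$ has bounded real exponent in the factor directions up to a nonpositive term. Because $|M_t|\le e^{\operatorname{Re}\phi+\operatorname{Re}(u)X_t}$, it is dominated by a true martingale, the exponential of $\operatorname{Re}(u)X$ times the discounted price. For $\operatorname{Re}u\in[0,1]$ this is controlled by Jensen and the martingale property of $e^{-rt}S_t$. Hence $M$ is a true martingale, which completes the plan; the extension to general $u$ in the strip is by analytic continuation.
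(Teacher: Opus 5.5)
Your overall route (exponential-affine ansatz, It\^o's formula, matching the coefficient of each $U^{n,i}$, reduction to a scalar Volterra equation for $\Psi$, global existence imported from the multifactor literature) is the one underlying \cite{AbiJaber2019}, to which the paper simply defers. Your Riccati system is correct, and you are right that the left-hand side should involve $\log S_T^n$. The constant term, however, is mis-integrated, and neither of your expressions for $\phi^n$ is right. The drift condition is $\partial_t\phi^n(t,T)+ur+g_0^n(t)\,F\bigl(u,\Psi(T-t)\bigr)=0$ with $\phi^n(T,T)=0$, hence
\[
\phi^n(t,T)=ur(T-t)+\int_0^{T-t}F\bigl(u,\Psi(s)\bigr)\,g_0^n(T-s)\dd s
\]
for any choice of $g_0^n$; the specific form from \cite{AbiJaber2019} plays no role. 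It is $g_0^n$, not $g_0^{n\prime}$, that enters. It is evaluated at calendar time $T-s$ because the model is time-inhomogeneous through $g_0^n(t)$, so your version with $g_0^n(T-t-s)$ is correct only at $t=0$. Already in the classical Heston case ($n=1$, $x_1=0$, $c_1^n=1$, $g_0^n(t)=V_0+\lambda\theta t$) it drops the term $\lambda\theta t\,\Psi(T-t)$, and then the drift of $M$ does not vanish.

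The genuine gap is the true-martingale step. Your bound $|M_t|\le e^{\operatorname{Re}\phi^n+\operatorname{Re}(u)X_t}$ requires $\sum_i c_i^n\operatorname{Re}\psi^{n,i}(T-t)\,U^{n,i}_t\le0$. This does not follow from $\operatorname{Re}\Psi\le0$, which controls only the weighted sum of the $\operatorname{Re}\psi^{n,i}$. The factors $U^{n,i}_t=\int_0^t e^{-x_i(t-s)}\bigl(-\lambda V_s\dd s+\nu\sqrt{V_s}\dd W_s\bigr)$ take both signs; only $V_t$ is nonnegative. The classical Heston argument works because there the exponent is $\psi(T-t)V_t$. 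Here, already for $n=1$, the identity $c_1^nU_t=V_t-g_0^n(t)$ produces an extra factor $e^{-\operatorname{Re}\Psi(T-t)\,g_0^n(t)}\ge1$, which is harmless since it is deterministic. For $n\ge2$, however, the a priori bound yields no deterministic upper bound at all.

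What does hold is a rewriting of the exponent as $ur(T-t)+uX_t+\int_t^T F\bigl(u,\Psi(T-s)\bigr)g_t(s)\dd s$, where $g_t(s)=g_0^n(s)+\sum_i c_i^n e^{-x_i(s-t)}U^{n,i}_t=\mathbb{E}\bigl[V_s+\lambda\int_t^s K^n(s-q)V_q\dd q\mid\mathcal{F}_t\bigr]\ge0$. This gives domination by $e^{\operatorname{Re}(u)(X_t+r(T-t))}$ whenever $\operatorname{Re}F(u,\Psi)\le0$ on $[0,T]$. For real $u\in[0,1]$ this can be checked by a Volterra comparison argument ($\Psi$ stays between the negative root of $F(u,\cdot)$ and $0$). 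For complex $u$ no such sign is available. You therefore need one of two routes:
\begin{itemize}
\item prove the real case and then give a genuine analytic-continuation argument in $u$, justifying analyticity of both sides of the conditional identity; or
\item use the martingale argument of the affine Volterra framework on which \cite{AbiJaber2019} rests.
\end{itemize}
Your closing remark about analytic continuation points in the right direction, but it is attached to a domination argument that does not hold.

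Two smaller points. First, $\operatorname{Re}\Psi\le0$ alone does not exclude blow-up; you also need control of $\operatorname{Im}\Psi$ and a lower bound on $\operatorname{Re}\Psi$, so take global existence wholesale from the cited result. Second, the dominating process should be $1+S_t\ge e^{\operatorname{Re}(u)X_t}$, which is of class (D). Jensen only gives a supermartingale, which is not enough.
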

\begin{proof}
See \cite{AbiJaber2019}.
\end{proof}

~

\noindent A European option with maturity $T$ and strike $K$ is priced as the risk-free discounted expected payoff under the risk-neutral measure $\mathbb{Q}$. Specifically, the time-$t$ price of a European call option is given by
\begin{equation}
C\left(S_t, t, v_t\right) \equiv \mathbb{E}^{\mathbb{Q}}\left[e^{-r(T-t)}\left(S_T-K\right)^{+}\right],
\end{equation}
where $(x)^+=\max(x,0)$ denotes the positive part operator.
 This price can be efficiently derived via Fourier inversion, following the closed-form representation:
\begin{equation}
\label{eq:fourier}
C_t=e^{-q (T-t)} S_t \Pi_1-e^{-r (T-t)} K \Pi_2,
\end{equation}
with $S_t$ the underlying spot price at time $t$, $r$ the constant risk-free rate, $q$ the continuous dividend yield, and $T-t$ the time to expiration. The terms $\Pi_1$ and $\Pi_2$ are probability-related integrals defined as
\begin{equation}
\begin{aligned}
\Pi_1&=\frac{1}{2}+\frac{1}{\pi} \int_0^{\infty} \operatorname{Re}\left(\frac{e^{-\mathrm{i} \omega \log(X)} \phi(\omega-\mathrm{i})}{\mathrm{i} \omega \phi(-\mathrm{i})}\right) \mathrm{d} \omega, \\
\Pi_2&=\frac{1}{2}+\frac{1}{\pi} \int_0^{\infty} \operatorname{Re}\left(\frac{e^{-\mathrm{i} \omega \log(X)} \phi(\omega)}{\mathrm{i} \omega}\right) \mathrm{d} \omega,
\end{aligned}
\end{equation}
where $\phi$ stands for the characteristic function of the log stock price under a stochastic volatility model, such as rough Heston \eqref{eq-rheston}; the function $\operatorname{Re}(\cdot)$ returns the real part of a complex number. 
Given an explicit form of $\phi(\cdot)$, $\Pi_1$ and $\Pi_2$ are computable via numerical integration, which in turn yields the call price through \eqref{eq:fourier}.\footnote{We focus on call option pricing hereafter, as put prices can be readily recovered via put-call parity.}

% ~

\end{appendices}
\end{document}